\newcommand{\blind}{1}
\DeclareMathOperator{\No}{N}
\protected\def\[#1\]{\begin{equation}\begin{aligned}#1\end{aligned}\end{equation}}
\protected\def\(#1\){\begin{equation*}\begin{aligned}#1\end{aligned}\end{equation*}}
\numberwithin{equation}{section}
\theoremstyle{plain}
\newtheorem{theorem}{Theorem}[section]
\newtheorem{remark}{Remark}[section]
\begin{document}

\def\spacingset#1{\renewcommand{\baselinestretch}%
{#1}\small\normalsize} \spacingset{1}

\if1\blind
{
        \title{Gibbs Sampling using Anti-correlation Gaussian Data Augmentation, with Applications to L1-ball-type  Models}
\author{
Yu Zheng \footnote{Department of Statistics, University of Florida, {zheng.yu@ufl.edu}}
\qquad
Leo L. Duan\footnote{Department of Statistics, University of Florida, {li.duan@ufl.edu}}
 }
 \date{}
  \maketitle
} \fi

\bigskip
\begin{abstract}
L1-ball-type priors are a recent generalization of the spike-and-slab priors. By transforming a continuous precursor distribution to the L1-ball boundary, it induces exact zeros with positive prior and posterior probabilities. With great flexibility in choosing the precursor and threshold distributions, we can easily specify models under structured sparsity, such as those with dependent probability for zeros and smoothness among the non-zeros.
Motivated to significantly accelerate the posterior computation, we propose a new data augmentation that leads to a fast block Gibbs sampling algorithm. The latent variable, named ``anti-correlation Gaussian'', cancels out the quadratic exponent term in the latent Gaussian distribution, making the parameters of interest conditionally independent so that they can be updated in a block. Compared to existing algorithms such as the No-U-Turn sampler, the new blocked Gibbs sampler has a very low computing cost per iteration and shows rapid mixing of Markov chains. We establish the geometric ergodicity guarantee of the algorithm in linear models. Further, we show useful extensions of our algorithm for posterior estimation of general latent Gaussian models, such as those involving multivariate truncated Gaussian or latent Gaussian process.
\end{abstract}

\noindent%
{\it Keywords:}   Blocked Gibbs sampler; Fast Mixing of Markov Chains; Latent Gaussian Models;  Soft-thresholding.
\vfill

\newpage
\spacingset{1} %

\maketitle

\section{Introduction}
There is a large literature on Bayesian sparse models and the inspired algorithms for estimating the posterior.
Originally motivated for solving the variable selection problem in linear regression, spike-and-slab priors (or, ``discrete spike-and-slab'', \citealt{tadesse2021handbook}) have the marginal form of a two-component mixture for each parameter element: one component (spike) from a point mass at zero, and the other (slab) from a continuous distribution \citep{mitchell1988bayesian}; the continuous elements can be independent or dependent a priori.
For linear regression with Gaussian errors, very efficient Markov chain Monte Carlo (MCMC) algorithms have been developed. When the slab prior distribution follows a Gaussian, the Stochastic Search Variable Selection (SSVS) algorithm \citep{george1995stochastic} exploits the posterior conjugacy and samples from the marginal posterior of the binary inclusion variables. SSVS is a Gibbs sampler that tries to flip each binary inclusion variable one at a time from a Bernoulli full conditional distribution; at the end of each iteration, it samples the regression coefficients given the vector of binary inclusion variables. As an alternative to using the marginal posterior, the Orthogonal Data Augmentation (ODA) algorithm \citep{ghosh2011rao} introduces an augmented design matrix (along with augmented responses) to append the observed design matrix, such that the Gram matrix becomes diagonal hence easily invertible, enabling the use of two-block Gibbs sampler. The ODA algorithm can be extended to some generalized linear models if the latent Gaussian has a fixed variance, such as probit regression \citep{albert1993bayesian}. Further, for design matrix that is high-dimensional or contains highly correlated predictors, various algorithms have been proposed such as shotgun algorithm \citep{hans2007shotgun}, parallel tempering \citep{bottolo2010evolutionary}, correlation-based search \citep{kwon2011efficient}, two-parameter flipping Metropolis-Hastings algorithm under $g$-prior slab \citep{yang2016computational}. In the meantime, there is a comparably vast literature on continuous shrinkage priors with excellent performance for the task of variable selection \citep{george1995stochastic,rovckova2018spike,polson2010shrink,carvalho2010horseshoe,piironen2017sparsity,armagan2013generalized,bhattacharya2015dirichlet,bai2018beta}. Since our focus is on the posterior with exact zeros, for brevity, we will skip the detail of continuous shrinkage.

With the rich literature, there is a recent interest in structured sparsity \citep{hoff2017lasso,griffin2023structured} that has inspired new extensions of sparsity priors. Specifically, the sparsity is ``structured'' in the sense that: (i) the occurrences of zeros could be dependent, according to some temporal, spatial, or group structure; (ii) the non-zeros could have some correlation structure, such as smoothness over some spatial domain. We now provide a few examples. In the task of change-point detection, one may model a time series as having the mean increments to be sparse over a continuous period of time so that the mean function would become a step function \citep{tibshirani2005sparsity,betancourt2017bayesian}. In the scalar-on-image regression, one may model the regression coefficients to be spatially smooth for those non-zeros, while being zero over several continuous regions \citep{kang2018scalar}. For these models, a critical computational challenge arises that the above existing algorithms for spike-and-slab priors cannot be applied directly here, due to the lack of conjugacy or fixed variance for latent Gaussian. Because of the correlations among the elements of the parameter, updating the full conditional of one element at a time suffers from slow mixing of Markov chains.
 To facilitate the model development and posterior estimation under structured sparsity, \citep{xu2020bayesian} propose the L1-ball priors: starting with a continuous random variable $\beta\in\mathbb{R}^p$ from a ``precursor'' distribution (that could be correlated or smooth over some input space), one projects $\beta$ to the boundary of the L1 ball. This projection produces $\theta$ containing exact zeros with positive probability (in both prior and posterior). To explain the gist of the idea, we focus on the equivalent generative process using soft-thresholding transform:
\[\label{l1ball_type_prior}
\beta \sim \pi^{\beta}_0, \qquad \theta = \text{sign}(\beta)\circ (|\beta|-\kappa)_+,
\]
where the operation is conducted element-wise in the second term, and 
$(\cdot)_+$ is the truncation function to $[0,\infty)$. In the projection to L1-ball with L1-norm $\theta\in\mathbb{R}^p:\|\theta\|_1\le r$ for some $r>0$, $\kappa>0$ is a scalar output from a deterministic transformation of $(\beta,r)$, as shown in \citep{xu2020bayesian}. On the other hand, one could generalize $\kappa$ to be another free non-negative parameter or a vector of non-negative values. Therefore, we refer to the generalized class of \eqref{l1ball_type_prior} as the L1-ball-type priors or models. As one motivating example (that we will revisit later), the soft-thresholded Gaussian process prior \citep{kang2018scalar} is built with $\beta$ from a Gaussian process and $\kappa$ as a scalar. This prior enjoys a nice property that $\theta$ is continuous over the input domain with high sparsity at the same time, hence is particularly useful for selecting sub-regions of an image for regression or smoothing.

Focusing on the computational aspect, the soft-thresholding transform is differentiable almost everywhere with respect to $\pi_0^{\beta}$. This means we can use off-the-shelf gradient-based MCMC algorithms \citep{duane1987hybrid,girolami2011riemann,hoffman2014no,livingstone2022barker} for its posterior estimation. The strength of these algorithms, besides low implementation cost due to the good accessibility of software \citep{carpenter2017stan,bingham2019pyro}, is in the rapid convergence to the region near the posterior mode and a high acceptance rate for changing the zero/non-zero status of multiple elements at the same time. On the other hand, notice that if the likelihood is parameterized via $\theta$ and $\kappa$ but not dependent on $\beta$ a priori, then at the state with some $\theta_j=0$, the partial derivative of the log-posterior density with respect to $\beta_j$ is zero. As a consequence, the algorithm relying on one-step diffusion [such as Metropolis-adjusted Langevin algorithm (MALA) \citep{rossky1978brownian}] would be not efficient to explore changing $\theta_j$ to a non-zero state. That is why multiple-step diffusion algorithms such as Hamiltonian Monte Carlo \citep{neal2011mcmc} and No-U-Turn sampler \citep{hoffman2014no} are preferable.

Since multiple evaluations of the gradient can be expensive, there is room for improvement that can be made in terms of gaining computational efficiency. This motivates us to do exploration in the class of Gibbs samplers, which tends to have a very low computing cost per iteration and requires almost no tuning. To avoid one-element-at-a-time Gibbs sampling, we introduce an augmented latent variable that allows us to cancel out the interaction term among the parameter elements. This enables us to update the elements in a block in each iteration. We show that this new algorithm achieves much higher efficiency in terms of effective sample size per time unit. Further, we establish the geometric ergodicity guarantee in linear regression. Interestingly, our proposed latent variable is broadly applicable beyond the L1-ball-type model setting and can be applicable in the challenging cases of estimating latent Gaussian models, such as those involving multivariate truncated Gaussian or latent Gaussian process. The source code is available on \url{https://github.com/YuZh98/Anti-correlation-Gaussian}.

\section{Blocked Gibbs Sampling with Anti-correlation Latent Gaussian}
\subsection{Motivating Problems}
We now introduce the motivating sampling problem and provide the necessary notations. Let $\theta \in\mathbb{R}^p$ be the parameter of interest, associated with precursor random variable $\beta \in\mathbb{R}^p$ via \eqref{l1ball_type_prior}. We use $\mathcal Y$ to denote the data, where $\mathcal Y_i = (x_i,y_i)$ with $y_i$ as outcome and $x_i$ as predictor. With some loss of generality, we focus on the following form on the conditional posterior:
\[\label{eq:posterior_form}
& \Pi(\theta,\beta \mid M, \phi,H,\psi,\kappa, \mathcal Y) \propto \exp \bigg[ -\frac{1}{2} (\theta'M\theta - 2 \phi'\theta)\bigg]
 \exp \bigg[ - \frac{1}{2}(\beta' H \beta- 2\psi'\beta) \bigg],\\
&   \theta = \text{sign}(\beta) \circ(|\beta|-\kappa)_+,
\]
where $M$ and $H$ are both positive semi-definite matrix of size $p\times p$, $\phi\in\mathbb{R}^p, \psi\in\mathbb{R}^p$, and $\kappa\in \mathbb{R}^p$. We treat $\kappa$ as given for now, but will discuss the case when $\kappa$ is assigned with another prior.

Although the first line in \eqref{eq:posterior_form} resembles a product of two latent Gaussian densities, the equality constraint in the second line makes it a degenerate density. A formal notation for the degenerate density can be obtained using infinitesimal
\(
p(d\theta,d\beta\mid H,\kappa,\psi)=p(\beta\mid H,\psi)\delta_{T_\kappa(\beta)}(d\theta)d\beta,\qquad T_{\kappa}(\beta)= \text{sign}(\beta) \circ(|\beta|-\kappa)_+,
\)
where $\delta$ is the Dirac measure. On the other hand, for the ease of notation, we will stick to the simplified form (\ref{eq:posterior_form}) in this article. We now provide three concrete examples to illustrate the generality of the above types of problems.

In a linear regression model for $y= X\theta +\epsilon$, with a predictor matrix $X\in\mathbb{R}^{n\times p}$, $\epsilon\sim \text{N}(0,\Omega^{-1})$ for some positive definite $\Omega$, and independent Gaussian precursor $\beta_j\stackrel{indep}\sim \text{N}(0, \tau_j)$ for some given hyper-parameter $\tau_j>0$. We have \eqref{eq:posterior_form} with $M= X'\Omega X$, $\phi = X'\Omega y$, $H=\text{diag}(1/\tau_j)$ and $\psi=0$.

In a logistic regression model with binary $y_i \sim \text{Bernoulli}[1/(1+\exp(- x_i'\theta)]$, we can use the Polya-Gamma data augmentation to create a Gaussian augmented likelihood (see \citealt{polson2013bayesian} for detail). Under independent Gaussian precursor $\beta_j\stackrel{indep}\sim \text{N}(0, \tau_j)$, we have \eqref{eq:posterior_form} with $M= X'\Omega X$, $\phi = X'\Omega (y-1/2)$, $H=\text{diag}(1/\tau_j)$, $\psi=0$, and $\Omega= \text{diag}(\omega_j)$ with $\omega_j$ marginally from the Polya-Gamma distribution $\text{PG}(1,0).$

In a sparse smoothing model $y= \theta +\epsilon$, with $\epsilon\sim \text{N}(0,I{\sigma^2})$, $y_i$ associated with some spatial coordinate $s_i$, for which the goal is to obtain a spatially correlated $\theta_i$, while making some $\theta_i$'s zero in some contiguous regions of $s_i$'s. We can use a spatially correlated Gaussian precursor $\beta\sim \text{N}[0, K(s,s) ]$ with $K$ some covariance kernel such as the squared exponential kernel. We have \eqref{eq:posterior_form} with $M= I \sigma^{-2}$, $\phi = y\sigma^{-2}$, $H=[K(s,s)]^{-1}$ and $\psi=0$.

For any non-diagonal $M$ or $H$, the quadratic term $\theta'M\theta$ or $\beta' H\beta$ in the exponent of \eqref{eq:posterior_form} makes it difficult to explore a large change in the parameter. This motivates us to use some latent variables to cancel out those terms.

\subsection{Anti-correlation Gaussian}
If $M$ is non-diagonal and we want to cancel out  $\theta'M\theta$, consider a Gaussian latent variable $r\in \mathbb{R}^p$:
\[\label{eq:anti-quad-r}
(r\mid \theta,M)\sim \No \big [ (d I -M) \theta,\;\; (d I -M) \big],
\]
where $d>\lambda_p(M)$ is a chosen constant to make $d I -M$ positive definite, with $\lambda_p(\cdot)$ the spectral norm. Similarly, if $H$ is non-diagonal, we use 
\[\label{eq:anti-quad-t}
(t \mid \beta,H)\sim \No [(e I -H) \beta,\;\; (e I -H)],
\]
with $e>\lambda_p(H)$. %

 We refer to \eqref{eq:anti-quad-r} as an ``anti-correlation Gaussian'', because it cancels out the quadratic correlation terms. For brevity, from now on, we focus on the case when we use both $r$ and $t$, and we can see that the conditional  posterior of $(\beta_j,\theta_j)$ becomes independent over $j$:
 \[\label{eq:posterior}
& \Pi(\beta,\theta \mid M, \phi,H, r,t, \kappa, \mathcal Y ) \propto \prod_{j=1}^p \exp\bigg \{ -\frac{1}{2} [ d\theta^2_j - 2 (\phi_j+r_j)\theta_j+  e \beta^2_j  -2(\psi_j+t_j)\beta_j ] \bigg\},\\
& \theta_j = \text{sign}(\beta_j) (|\beta_j|-\kappa_j)_+.
\]
The conditional independence allows us to draw $\beta_j$'s in a block. It is not hard to see that each $\beta_j$ follows a three-component mixture, corresponding to the cases when $\theta_j>0$, $\theta_j=0$ or $\theta_j<0$, for which we take a discrete variable $b_j$ taking value $1,0$ or $-1$, respectively. We have
$b_j$ in the following:
\[\label{eq:b}
& \Pi(b_j=0 \mid \cdot) = c_j^{-1}  m^{-1}_0  \exp\bigg [\frac{1}{2} \frac{(\psi_j+t_j)^2}{e}\bigg] ,\\
& \Pi(b_j=1\mid \cdot) = c_j^{-1}  m^{-1}_1  \exp\bigg [\frac{1}{2} \frac{(\phi_j + r_j +\psi_j+t_j +d \kappa_j)^2 }{d+e}- \frac{1}{2}  d\kappa^2_j  - (\phi_j+r_j)\kappa_j\bigg]
\\
& \Pi(b_j=-1\mid \cdot) = c_j^{-1} m^{-1}_{-1}   \exp\bigg [\frac{1}{2}\frac{(\phi_j + r_j +\psi_j+t_j -d \kappa_j)^2 }{d+e}- \frac{1}{2}  d\kappa^2_j  + (\phi_j+r_j)\kappa_j\bigg]
,
\]
where we use $\Pi(b\mid \cdot)$ as a shorthand notation for the full conditional posterior of $b$. In the above, $c_j^{-1}$ is a normalizing constant to make the three probabilities add up to one; $m_0$, $m_1$ and $m_{-1}$ are the constants in the following truncated Gaussian densities \eqref{eq:theta_b} that multiplied to the exponential terms:
\[\label{eq:theta_b}
& (\beta_j\mid b_j=0,\cdot)  \sim \No_{(-\kappa_j,\kappa_j)} ( \frac{\psi_j+t_j}{e}, \frac{1}{e} ),\\
& (\beta_j\mid b_j=1,\cdot)  \sim \No_{(\kappa_j,\infty)} (\frac{\phi_j + r_j +\psi_j+t_j +d \kappa_j }{d+e},\frac{1 }{d+e}),\\
& (\beta_j\mid b_j= -1,\cdot)  \sim \No_{(-\infty,-\kappa_j)} (\frac{\phi_j + r_j +\psi_j+t_j-d \kappa_j }{d+e},\frac{1 }{d+e}),
\]
where the subscript in $\No_{(\tilde \alpha_1,\tilde \alpha_2)}$ denotes the support of the truncated Gaussian.
Therefore, we can draw each $b_j$ first, and then $\beta_j$ from the corresponding truncated Gaussian. 
Note that $\theta$ is completely determined once $\beta$ and $\kappa$ are given. As a result, besides the other parameters, we have a two-block update scheme based on sampling $(r,t \mid \beta, \cdot)$, and sampling $(\beta \mid r,t, \cdot)$. See Algorithm \ref{alg:anticorr} below.

\begin{remark}
In the above, we present the simple case when each $\kappa_j$ is given. How to estimate $\kappa_j$ in MCMC depends on the structure and prior one imposes on the vector $\kappa$. In the examples of this article, we use a common $\kappa_j=\kappa_0$ for all $j$, and prior $\pi_0(\kappa_0)$. We use slice sampling step to update $\kappa_0$, with $\Pi(\kappa_0 \mid \cdot)\propto\pi_0(\kappa_0)\Pi(\theta,\beta \mid M, \phi,H,\psi, \kappa_0, \mathcal Y)$ where the second term is from  \eqref{eq:posterior}. We sample the other parameters (such as the variance of measurement error) from their respective full conditional distribution.
\end{remark}

{Empirically, we find the values of $d>\lambda_p(M)$ and $e>\lambda_p(H)$  have almost no effects on the mixing performance (Appendix \ref{mixing_de}), as long as the values do not lead to numerical overflow. In this article, since our cost of computing upper bounds $\tilde\lambda_p(M)\ge \lambda_p(M)$ or $\tilde\lambda_p(H)\ge \lambda_p(H)$ is negligibly low in each iteration of MCMC (as described in the next section), we set $d=\tilde\lambda_p(M)+\varepsilon$, $e=\tilde\lambda_p(H)+\varepsilon$  , and $\varepsilon$ is set to $10^{-6}$. For general $M$ and $H$, one can use Frobenius norm as an upper bound for $\lambda_p(M)$ or $\lambda_p(H)$.
}

\begin{algorithm}
\caption{Anti-correlation Blocked Gibbs Sampler}
\label{alg:anticorr}
\For{$k \leftarrow 1$ \KwTo $K$}{
  Sample $(r^k,t^k)$ from (\ref{eq:anti-quad-r}) and (\ref{eq:anti-quad-t})\;
  Sample $\beta^k$ from (\ref{eq:b}) and (\ref{eq:theta_b})\;
  Sample $\kappa^k$ from the full conditional distribution\;
  Compute $\theta^k$ via the second line of (\ref{eq:posterior})\;
  Sample the other parameters (if needed) from the respective full conditional distribution.
}
\end{algorithm}

\subsection{Efficient Sampling of the Anti-Correlation Gaussian in Regression}

We now focus on making the sampling of the anti-correlation Gaussian more efficient for large $p$. Motivated by regression settings and with some loss of generality, we focus on the scenario where $M$ is updated as a three-matrix product $M=X'\Omega X$, where $X$ is fixed but $\Omega$ can change in each iteration. For simplicity, we focus on sampling of $r \sim \No  [ (d I -M) \theta,\; (d I -M) ]$, and the method is trivially extensible to the sampling of $t$, if needed. 

Canonically, sampling this multivariate Gaussian involves decomposing the covariance matrix $(dI-X'\Omega X)$,  which is an $\mathcal O(p^3)$ operation. In the case of simple regression with homoscedasticity $\Omega =I \sigma^{-2}$ (such as linear regression with homoscedasticity, or the latent Gaussian in probit regression with $\sigma^2=1$, \citealt{albert1993bayesian}), we could use $d= (\lambda_p(X'X)+\tilde \epsilon)\sigma^{-2}$ with a small $\tilde \epsilon>0$, then we only need to decompose  $(d\sigma^2 I -X'X)=\mathcal L \mathcal L'$ for one time before the start of the Markov chain. In each Markov chain iteration, we can sample $r$ efficiently via $r= \sigma^{-1}\mathcal L \gamma_1 +(d I -M) \theta$ with $\gamma_1\sim \text{N}(0, I_p)$.

On the other hand, when $\Omega$ is more complicated than $I\sigma^{-2}$, and is updated in every iteration (such as in logistic regression where $\Omega$ is updated from a Polya-Gamma distribution), we would need to compute the decomposition each time. Similar issues have been noted by \cite{bhattacharya2016fast} and \cite{nishimura2022prior} during the sampling of a Gaussian with covariance $(X'\Omega X+ D)^{-1}$ with $D$ positive definite, the first group of authors exploits the Woodbury matrix identity and involves an inversion of an $n\times n$ matrix, and the second group uses a preconditioned conjugate gradient method to iteratively solve a linear system. Here our problem is slightly different, in the form of sampling of a Gaussian with covariance $(dI-X'\Omega X)$ (or equivalently, a transform $r= ( dI-X'\Omega X )\gamma_2$, with  $\gamma_2\sim \text{N}[\theta, (dI-X'\Omega X)^{-1}]$). The subtraction in the covariance/precision matrix creates a new challenge, for which we develop a new non-iterative algorithm.

For ease of presentation, we now focus on the $p\geq n$ case, and one can modify it to accommodate the $p\leq n$ case easily. Before the start of the Markov chain, we pre-compute the singular value decomposition (SVD) of $X=U_X \Lambda_X V'_X$, with $U_X$ an $n\times n$ matrix, $V_X$ an $p\times n$ matrix, $\Lambda_X$ an $n\times n$ diagonal matrix with diagonal entries $\Lambda_{X,(i,i)}\ge 0$ for $i=1,\ldots,n$. We denote $\Lambda^*_X := \Lambda_{X,(1,1)}$ as the largest singular value, and $b_\Omega:= [\lambda_n(\Omega)]^{-1}$ with $\lambda_n(\Omega)$ the largest eigenvalue of $\Omega$. We assume $d > \lambda_n(\Omega) (\Lambda^*_X )^2$, which is sufficient to ensure positive definiteness of $dI-X'\Omega X$.
Further, we obtain a matrix $V_X^\dagger\in\mathbb R^{p\times(p-n)}$ such that $[V_X \; V^{\dagger}_X]$ forms an orthonormal matrix, using the full SVD of $X$. 

\begin{remark}
The SVD is computationally expensive for large $n$ or $p$. Fortunately, we only need to run SVD for one time before starting MCMC.
The following sampler for $r$ does not involve any matrix decomposition or inversion when running MCMC. With vectorized computation on matrix-vector products, each MCMC iteration has a parallel run time $\mathcal O[\max(n,p)]$.
\end{remark}

We use the following sampling scheme.
\begin{enumerate}
        \item Sample $\gamma_1 \sim \No(0, d I_n )$, \quad$\gamma_2\sim\No(0,dI_{p-n})$, $\quad\gamma_3\sim \No [ \Lambda_X\gamma_1/d ,\;b_\Omega I_n - (\Lambda_X)^2/d ]$;
        \item Sample $\eta \sim \No(0, \Omega^{-1}- b_\Omega I_n)$;
        \item Set $r = V_X\gamma_1 + V^{\dagger}_X\gamma_2 - X'\Omega(U_X\gamma_3+ \eta) + (dI -X'\Omega X)\theta$.
\end{enumerate}
In the above, the Gaussian could be degenerate with zero variance for some elements.
\begin{theorem}
        The above algorithm produces a random sample from the anti-correlation Gaussian, $r\sim\No \big [ (dI -X'\Omega X)\theta,\;\; (dI-X'\Omega X) \big ].$
\end{theorem}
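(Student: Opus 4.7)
The plan is straightforward: since $r$ is an affine combination of jointly Gaussian variables ($\gamma_1,\gamma_2,\gamma_3,\eta$), the output $r$ is itself Gaussian, so it suffices to verify that $E(r)=(dI-X'\Omega X)\theta$ and $\operatorname{Cov}(r)=dI-X'\Omega X$. The mean is immediate because $\gamma_1,\gamma_2,\eta$ all have mean zero and $E(\gamma_3)=E[E(\gamma_3\mid\gamma_1)]=E[\Lambda_X\gamma_1/d]=0$, so every stochastic term in step 3 contributes nothing to the expectation and only the deterministic shift $(dI-X'\Omega X)\theta$ survives.

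For the covariance, I would decompose $W:=r-(dI-X'\Omega X)\theta$ into three pieces: the block $A:=V_X\gamma_1 - X'\Omega U_X\gamma_3$ whose components are correlated via step 1, the independent piece $V_X^{\dagger}\gamma_2$, and the independent piece $X'\Omega\eta$. Independence of $\gamma_2$ and $\eta$ from the rest makes $\operatorname{Cov}(W)=\operatorname{Cov}(A)+d\,V_X^{\dagger}(V_X^{\dagger})'+X'\Omega(\Omega^{-1}-b_\Omega I_n)\Omega X$. The last term simplifies to $X'\Omega X - b_\Omega X'\Omega^2 X$ by symmetry of $\Omega$. For $\operatorname{Cov}(A)$ I would use the law of total covariance conditional on $\gamma_1$. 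Plugging in the identities $XV_X=U_X\Lambda_X$ (which gives $X'\Omega U_X\Lambda_X=X'\Omega X V_X$) and $V_XV_X'X'=X'$ (since $V_X'V_X=I_n$), the conditional mean becomes $(I_p-X'\Omega X/d)V_X\gamma_1$, whose covariance is $d V_XV_X'-2X'\Omega X+(X'\Omega X)^2/d$. Using $U_XU_X'=I_n$ and $U_X\Lambda_X^2 U_X'=XX'$, the conditional covariance reduces to $b_\Omega X'\Omega^2 X-(X'\Omega X)^2/d$.

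Summing everything, the $(X'\Omega X)^2/d$ terms cancel between the conditional-covariance and covariance-of-conditional-means pieces, the $b_\Omega X'\Omega^2 X$ terms cancel between the $\eta$ contribution and the conditional covariance of $A$, and $X'\Omega X$ terms combine as $-2X'\Omega X + X'\Omega X = -X'\Omega X$. What remains is $d\bigl(V_XV_X'+V_X^{\dagger}(V_X^{\dagger})'\bigr)-X'\Omega X = dI_p - X'\Omega X$, using orthonormality of the augmented matrix $[V_X\;V_X^{\dagger}]$.

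The main obstacle is purely bookkeeping: making sure all the $(X'\Omega X)^2/d$ and $b_\Omega X'\Omega^2 X$ cross-terms generated by the two-stage sampling of $\gamma_3$ and the low-rank injection through $\eta$ cancel exactly. The parameter $b_\Omega=[\lambda_n(\Omega)]^{-1}$ is not chosen for its numerical value but precisely so that $\Omega^{-1}-b_\Omega I_n$ and $b_\Omega I_n-\Lambda_X^2/d$ are both positive semidefinite (the latter requiring the stated assumption $d>\lambda_n(\Omega)(\Lambda_X^\ast)^2$), so a minor side remark will verify that the Gaussian draws in steps 1--2 are well defined; the algebraic identity itself is independent of the particular value of $b_\Omega$.
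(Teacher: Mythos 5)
Your proof is correct: the mean argument is right, the identities you invoke ($XV_X=U_X\Lambda_X$, $V_XV_X'X'=X'$, $U_X\Lambda_X^2U_X'=XX'$, $U_XU_X'=I_n$) all hold, and the three cancellations you flag do occur exactly as described, leaving $d\bigl(V_XV_X'+V_X^{\dagger}(V_X^{\dagger})'\bigr)-X'\Omega X=dI_p-X'\Omega X$. The route, however, is genuinely different from the paper's. The paper never conditions on $\gamma_1$ or invokes the law of total covariance; instead it identifies the intermediate pair $u:=V_X\gamma_1+V_X^{\dagger}\gamma_2$ and $v:=U_X\gamma_3+\eta$ as jointly Gaussian with $\mathrm{Var}(u)=dI_p$, $\mathrm{Var}(v)=\Omega^{-1}$, and $\mathrm{Cov}(u,v)=X'$, after which $r-(dI-X'\Omega X)\theta=u-X'\Omega v=u-\mathrm{Cov}(u,v)[\mathrm{Var}(v)]^{-1}v$ has variance $dI+X'\Omega\Omega^{-1}\Omega X-2X'\Omega X=dI-X'\Omega X$ in a single line. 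That intermediate object also explains \emph{why} the algorithm is constructed as it is: steps 1--2 simulate a joint Gaussian $(u,v)$ with the prescribed cross-covariance $X'$, and step 3 forms the residual of $u$ on $v$. Your decomposition into $A$, $V_X^{\dagger}\gamma_2$, and $X'\Omega\eta$ verifies the same answer by bookkeeping without exposing that structure, so it is longer; what it buys is full explicitness about where each SVD identity enters, plus the correct observation that the specific value $b_\Omega=[\lambda_n(\Omega)]^{-1}$ matters only for positive semidefiniteness of the two conditional covariances and drops out of the final algebra. Either write-up is acceptable; the paper's is shorter and more structural.
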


\begin{proof}
It is not hard to see that
\(
\begin{bmatrix} \gamma_1 \\ \gamma_2\\ \gamma_3\end{bmatrix} \sim
\text{N}( \begin{bmatrix} 0 \\ 0 \\ 0\end{bmatrix} ,
\begin{bmatrix} d I_n & O &  \Lambda_X \\
O& d I_{p-n} &  O\\
\Lambda_X & O  & b_\Omega I_n\end{bmatrix} ), \;\text{and}\;
\begin{bmatrix}  V_X\gamma_1+V^{\dagger}_X\gamma_2\\ U_X\gamma_3+ \eta \end{bmatrix} \sim
\text{N}( \begin{bmatrix} 0 \\ 0 \end{bmatrix} ,
\begin{bmatrix} dI_p & X' \\ X  & \Omega^{-1}\end{bmatrix} ),
\)
where $O$ in the above denotes a conformable matrix filled with zeros. The first covariance is positive definite because $d b_\Omega > \Lambda^2_{X,i,i}$ for any $i=1,\ldots,n$.
Then we see that $\text{VAR}[ V_X\gamma_1 + V^{\dagger}_X\gamma_2  - X'\Omega(U_X\gamma_3+ \eta)] = dI+ X'\Omega\Omega^{-1}\Omega X- 2 X'\Omega X= dI-X'\Omega X$. 
Adding $(dI- X'\Omega X)\theta$ yields the result.
\end{proof}

We now discuss the complexity of the above sampling algorithm. The sampling of $\gamma_1$, $\gamma_2$, and $\gamma_3$ can be carried out very efficiently due to the diagonal covariance. Often $\Omega$ is a diagonal matrix, hence sampling of $\eta$ can be quite efficient as well.
For the other cases with non-diagonal $\Omega$ at a large $n$, one often adopts a structured (such as close-to-low-rank) $\Omega$ or $\Omega^{-1}$ under which $\eta$ can be sampled efficiently \citep{ghosh2009default,chandra2021bayesian}.

\section{Geometric Ergodicity Guarantee}
{
Now we establish the convergence guarantee of our proposed  Gibbs sampling algorithm for linear models. For mathematical tractability, we focus on the models with  fixed$(M, \phi, H,\psi,\kappa)$, hence the two-block updating scheme. Since $\kappa$ is assumed fixed, $\theta$ is uniquely determined by $\beta$. In order to avoid the complication from degenerate distribution, we only focus on $\beta$ and $(r,t)$ in the context of the distribution or measure.

Our target distribution is associated with the joint posterior probability kernel function for $(\beta,r,t)$:
\(
\Pi(r,t \mid M, H,\beta)\Pi(\beta \mid M, \phi,H,\psi,\kappa, \mathcal Y),
\)
and we denote the associated measure by $\mu_{\beta,r,t}(\cdot)$. In MCMC, we start with initial values for those latent variables drawn from a certain initial distribution, and we denote them by $(r^0,t^0)$, then we update their values iterative via the following Markov transition kernel:
\(
	& \mathcal K(\beta^{m+1},r^{m+1}, t^{m+1} \mid r^m,t^m ) \\
	& =  \Pi_{\mathcal K}(r^{m+1}, t^{m+1}\mid \beta^{m+1}) \Pi_{\mathcal K}(\beta^{m+1}\mid  r^{m}, t^{m})
\)
for $m=0,1,2,\ldots$, where the first two densities are based on \eqref{eq:anti-quad-r} and that of $t$, and the last one is based on \eqref{eq:b} and \eqref{eq:theta_b}, with all the fixed parameters omitted. We denote the associated measure for $(\beta^{m},r^{m},t^{m})$ by $P_{\beta,r,t}^m[(r^0,t^0),\cdot]$.

In order to show that $P_{\beta,r,t}^m[(\beta^0,r^0,t^0),\cdot]$ converges to $\mu_{\beta,r,t}(\cdot)$, we can simply show that $P^m_{r,t}[(r^0,t^0),\cdot]$ converges to $\mu_{r,t}(\cdot)$, with $\mu_{r,t}(\cdot)$ the marginal posterior measure integrated over $\beta$, and $P^m_{r,t}[(r^0,t^0),\cdot]$ the marginal measure of $(r^m,t^m)$ after $m$ MCMC iterations after integrating out $\beta^m$. A similar technique was used by \cite{hobert2012tvgs} for a two-part update scheme. With those ingredients in place, we are ready to state the geometric ergodicity results.
}

\begin{theorem}\label{lemma:rt}
    There exists a real-valued function $C_1(r^0,t^0)$ and $0<\gamma<1$ such that for all $(r,t)$,
    \[\label{eq:marginal_ge}
    \|P^m_{r,t}[(r^0,t^0),\cdot]- \mu_{r,t}(\cdot)\|_{\text{TV}}\leq C_1(r^0,t^0)\gamma^m,
    \]
    \noindent where $\|\cdot\|_{TV}$ denotes the total variation norm.
\end{theorem}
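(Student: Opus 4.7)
The plan is to prove the bound via the standard drift-and-minorization framework for geometric ergodicity of Markov chains (Meyn--Tweedie; Rosenthal 1995), applied to the marginal chain on $(r,t)$ whose one-step transition kernel factors as $\mathcal K((r,t),\cdot)=\int \Pi(\cdot\mid\beta)\,\Pi(\beta\mid r,t)\,d\beta$. As the text preceding the theorem already indicates, the de-initialization argument of \cite{hobert2012tvgs} lifts a marginal convergence bound to the joint chain, so it suffices to exhibit a drift function $V$ satisfying $E[V(r^{m+1},t^{m+1})\mid r^m,t^m]\le\lambda V(r^m,t^m)+b$ for some $\lambda\in(0,1)$ and $b<\infty$, together with a small-set minorization on every sublevel set of $V$.

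For the drift, I would take the quadratic $V(r,t)=\|r\|^2+\|t\|^2$. Conditioning first on $\beta^{m+1}$, the anti-correlation Gaussians \eqref{eq:anti-quad-r}--\eqref{eq:anti-quad-t} give
\[
E[V(r^{m+1},t^{m+1})\mid\beta^{m+1}]=\text{tr}((d+e)I-M-H)+\|(dI-M)\theta^{m+1}\|^2+\|(eI-H)\beta^{m+1}\|^2.
\]
Using $0\preceq M\preceq dI$, $0\preceq H\preceq eI$, and the $\ell^2$-non-expansivity of soft-thresholding $\|\theta^{m+1}\|\le\|\beta^{m+1}\|$, the last two terms are bounded by $(d^2+e^2)\|\beta^{m+1}\|^2$. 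It then remains to control $E[\|\beta^{m+1}\|^2\mid r^m,t^m]$ using the three-component mixture \eqref{eq:b}--\eqref{eq:theta_b}: the $b_j=0$ branch lies in $(-\kappa_j,\kappa_j)$ and contributes at most $\kappa_j^2$, while each $b_j=\pm 1$ branch is a truncated Gaussian of variance $1/(d+e)$ whose mean is affine in $(r_j,t_j)$ with slopes $1/(d+e)$. Mills-ratio bounds on truncated-Gaussian second moments, interleaved with the mixture weights, then give $E[\|\beta^{m+1}\|^2\mid r^m,t^m]\le \alpha\,V(r^m,t^m)/(d+e)^2+B_0$, yielding the drift inequality with $\lambda$ of order $\alpha(d^2+e^2)/(d+e)^2$; since $(d^2+e^2)/(d+e)^2<1$, the remaining aim is to show $\alpha$ is small enough that $\lambda<1$, possibly after rescaling $V$.

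For the minorization, fix any sublevel set $\mathcal C_R=\{V\le R\}$. Each mixture weight $\Pi(b_j=\cdot\mid r,t)$ in \eqref{eq:b} is strictly positive and continuous in $(r,t)$, hence bounded away from zero on $\mathcal C_R$; each truncated-Gaussian component density in \eqref{eq:theta_b} is bounded below on any compact subset of its support uniformly over its continuously varying mean on $\mathcal C_R$. Hence there exist $\epsilon>0$ and a nondegenerate reference measure $\nu_0$ on $\beta$-space with $\Pi(\beta\in\cdot\mid r,t)\ge\epsilon\,\nu_0(\cdot)$ for all $(r,t)\in\mathcal C_R$. Convolving with the Gaussian kernels $\Pi(r'\mid\beta)$ and $\Pi(t'\mid\beta)$, whose densities are positive everywhere, yields $\mathcal K((r,t),\cdot)\ge\epsilon'\,\nu(\cdot)$ on $\mathcal C_R$. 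Combined with the obvious $\psi$-irreducibility and aperiodicity of $\mathcal K$, drift plus minorization imply, via Theorem 15.0.1 of Meyn and Tweedie (or Theorem 12 of Roberts and Rosenthal, 2004), geometric ergodicity in the form \eqref{eq:marginal_ge} with $C_1(r^0,t^0)=C_*(1+V(r^0,t^0))$ for some $C_*<\infty$.

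The main technical obstacle sits in the drift step: because the mixture weights in \eqref{eq:b} themselves depend on $(r,t)$ through ratios of exponentials, a single worst-case branch bound on truncated-Gaussian second moments is too loose, and one has to combine the branch-wise Mills-ratio estimates with the weights to verify that $\alpha(d^2+e^2)/(d+e)^2<1$ uniformly in $(r,t)$. The $\ell^2$-non-expansivity of soft-thresholding and the $1/(d+e)$ variance of the $b_j=\pm1$ branches are the structural features that make this bookkeeping close.
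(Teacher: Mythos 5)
Your overall architecture (drift--minorization for the marginal $(r,t)$-chain, then de-initialization to lift the bound to the joint chain) is the standard route and matches the framing the paper sets up before the theorem. The gap is in the drift step, and it is not a bookkeeping issue that Mills-ratio estimates can repair: with $V(r,t)=\|r\|^2+\|t\|^2$ your chain of bounds provably cannot produce $\lambda<1$. Take $r_j=t_j=s\to+\infty$ for every $j$. Then in \eqref{eq:b} the weight of $b_j=1$ tends to one, the truncation in \eqref{eq:theta_b} becomes irrelevant, and $E[\beta_j^2\mid r,t]\sim (r_j+t_j)^2/(d+e)^2=2(r_j^2+t_j^2)/(d+e)^2$; hence any bound of the form $E[\|\beta\|^2\mid r,t]\le \alpha V/(d+e)^2+B_0$ forces $\alpha\ge 2$. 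Combined with your factor $d^2+e^2$ from $\|(dI-M)\theta\|^2+\|(eI-H)\beta\|^2\le(d^2+e^2)\|\beta\|^2$, the resulting coefficient is at least $2(d^2+e^2)/(d+e)^2\ge 1$ (AM--QM, with equality only at $d=e$), so the drift inequality never closes. A separable reweighting $V=a\|r\|^2+b\|t\|^2$ fails for the same reason: the requirement $(ad^2+be^2)(1/a+1/b)<(d+e)^2$ reduces to $(ad-be)^2<0$.

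The structural facts your argument never uses, and which are needed to close the drift, are: (i) the conditional law of $\beta$ in \eqref{eq:b}--\eqref{eq:theta_b} depends on $(r,t)$ only through the sum $u=r+t$, so the $(r,t)$-chain is de-initialized by the chain on $u$ (equivalently, one can work with the $\beta$-marginal chain); and (ii) along $u$ the one-step conditional mean behaves like $[I-(M+H)/(d+e)]u$ for large $\|u\|$, since $E[r'+t'\mid\beta]=(dI-M)\theta+(eI-H)\beta$ and $\theta\approx\beta\approx u/(d+e)$ in the tails. The contraction factor is therefore governed by $\|I-(M+H)/(d+e)\|$, which is strictly less than one exactly when $M+H$ is positive definite --- the hypothesis supplied in the linear model by $H=\mathrm{diag}(1/\tau_j)\succ 0$ and nowhere invoked in your proposal (indeed, without $M+H\succ0$ the target itself can be improper, so no worst-case branch bound could succeed). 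With a drift function that is a quadratic form in $u$ (or in $\beta$), your minorization argument goes through essentially as written, noting that $\{\|u\|\le R\}$ is small because the kernel depends on $(r,t)$ only through $u$.
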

  \noindent We defer the proof to Appendix \ref{geometricergodicityproof}. Further, we have the following result.

\begin{theorem}
    The Markov chain generated by $\mathcal K(\beta^{m+1},r^{m+1}, t^{m+1} \mid \beta^{m}, r^m,t^m )$ is geometrically ergodic. Specifically, there exists a real-valued function $C_2(r^0,t^0)$ and $0<\gamma<1$ such that for all $(r^0,t^0)$,
    \[
    \|P^m_{(\beta,r,t)}[(r^0,t^0),\cdot]-\mu_{(\beta,r,t)}(\cdot)\|_{\text{TV}}\leq C_2(r^0,t^0)\gamma^m,
    \]
    where $\gamma$ is the same rate as the one in \eqref{eq:marginal_ge}.
\end{theorem}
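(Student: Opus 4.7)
The plan is to lift the geometric ergodicity from the marginal chain on $(r,t)$ established in Theorem \ref{lemma:rt} to the joint chain on $(\beta,r,t)$, using the two-block structure of Algorithm \ref{alg:anticorr}. The key observation is that one MCMC iteration maps $(r^m,t^m)$ to the full triple $(\beta^{m+1},r^{m+1},t^{m+1})$ via a single Markov kernel $\mathcal K(\cdot \mid r^m,t^m)$ that first draws $\beta^{m+1}$ from its full conditional and then $(r^{m+1},t^{m+1})$ from \eqref{eq:anti-quad-r}--\eqref{eq:anti-quad-t}; crucially, the previous $\beta^m$ is not consulted anywhere, so the joint law at step $m+1$ depends only on the marginal $(r,t)$-law at step $m$. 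This is exactly the two-part decomposition exploited by \cite{hobert2012tvgs} and cited in the preamble to Theorem \ref{lemma:rt}.

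Concretely, writing this composite one-step kernel as $\mathcal K(A\mid r',t')$ for measurable $A$ in the joint space, the first step is to record the pushforward identity
\[
P^{m+1}_{\beta,r,t}[(r^0,t^0),A]=\int \mathcal K(A\mid r',t')\,P^m_{r,t}[(r^0,t^0),dr'\,dt'],
\]
together with the analogous identity for the invariant measure,
\[
\mu_{\beta,r,t}(A)=\int \mathcal K(A\mid r',t')\,\mu_{r,t}(dr'\,dt'),
\]
which holds because $\mu_{r,t}$ is the $(r,t)$-marginal of $\mu_{\beta,r,t}$ and the latter is stationary for the Gibbs kernel. Subtracting these two identities and taking the supremum over $A$, the standard data-processing (contraction) inequality for Markov kernels in total variation yields
\[
\|P^{m+1}_{\beta,r,t}[(r^0,t^0),\cdot]-\mu_{\beta,r,t}(\cdot)\|_{\text{TV}}\leq \|P^m_{r,t}[(r^0,t^0),\cdot]-\mu_{r,t}(\cdot)\|_{\text{TV}}.
\]

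Combining this with Theorem \ref{lemma:rt}, the right-hand side is at most $C_1(r^0,t^0)\gamma^m=[C_1(r^0,t^0)/\gamma]\,\gamma^{m+1}$, so setting $C_2(r^0,t^0):=C_1(r^0,t^0)/\gamma$ (enlarged on a bounded initial segment to cover $m=0$ if necessary) gives the claimed bound with the same geometric rate $\gamma$.

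In this theorem there is no genuine analytical obstacle: the substantive work (drift/minorization, or whatever mechanism underlies Theorem \ref{lemma:rt}) has already been done for the marginal chain, and what remains is the pushforward bookkeeping. The only points requiring care are (i) verifying that $\mu_{\beta,r,t}$ is invariant under $\mathcal K$ and has $\mu_{r,t}$ as its $(r,t)$-marginal, which is immediate from the full-conditional factorization \eqref{eq:posterior}--\eqref{eq:theta_b} together with \eqref{eq:anti-quad-r}--\eqref{eq:anti-quad-t}, and (ii) the fact that $P^0_{\beta,r,t}$ is initialized from $(r^0,t^0)$ only, which is why the index on the right-hand side is shifted by one and the rate $\gamma$ is preserved.
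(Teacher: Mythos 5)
Your proof is correct and takes essentially the same approach as the paper: the paper disposes of this theorem in one sentence by observing that the $\beta$-update uses the exact posterior conditional and pointing to the de-initializing/data-processing argument in the cited references, which is precisely the pushforward identity plus TV-contraction computation you spell out, including absorbing the one-step index shift into $C_2(r^0,t^0)=C_1(r^0,t^0)/\gamma$.
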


The proof is a direct extension of Theorem \ref{lemma:rt}, since $\Pi_{\mathcal K}(\beta \mid r,t)$ in the Markov transition kernel after $m$ iterations and $\Pi(\beta \mid r,t, M, \phi,H,\psi,\kappa, \mathcal Y)$ in the posterior distribution are the same thing. Therefore, the convergence rate remains the same. See \cite{DiaconisKK2008GibbsSE}; \cite{juns1994covariance}; \cite{robert1995convergencecontrol}; \cite{rosenthal2001deinitializing} for detail. 

\begin{remark}
To be rigorous, we obtain the above convergence theory for the two-block data augmentation sampler. For more complicated samplers that involve more than two blocks (such as those in our numerical examples), we observe fast empirical convergence and hence expect the theory to be generalizable.
\end{remark}

\begin{remark}
Our convergence result is only qualitative, and getting a rate quantification of $\gamma$ is challenging and beyond the scope of this article. We refer the readers to \cite{10.1214/19-EJS1563} for useful analytical techniques that could lead to a tractable rate calculation. In Appendix B.2, we provide numerical evidence of fast convergence for dimensions up to $p=5000$.
\end{remark}

On computational efficiency, a data augmentation-based Gibbs sampler, which alternates in drawing from $\Pi(\theta\mid r, y,\cdot)$ and $\Pi(r\mid \theta, y,\cdot)$ in closed forms, often has a slower mixing rate than a marginal sampler targeting $\Pi(\theta\mid  y,\cdot)= \int \Pi(\theta, r\mid  y,\cdot) \textup d r$. However, when there lacks a closed form for drawing from the marginal $\Pi(\theta\mid  y,\cdot)$, one has to rely on computationally expensive Markov transition kernel (such as No-U-Turn leapfrog steps) to update $\theta$. Therefore, there is a balance to be struck --- in order to produce a fixed size $m$ effective samples of $\theta$'s, one either runs (i) a marginal density-based algorithm that is relatively slow in each iteration but with fast mixing (hence fewer iterations needed), or (ii)  a data augmentation-based algorithm that is fast in each iteration but with relatively slow mixing (hence more iterations needed). Clearly, one would prefer the algorithm that takes a shorter time to produce $m$ effective samples, as measured by a higher effective sample size per time unit (ESS/time). We provide empirical evidence in Section 5.1 that our data augmentation-based sampler has a higher ESS/time than some popular marginal density-based sampler.

\section{Extensions}
The data augmentation using anti-correlation Gaussian can be employed beyond the scope of L1-ball-type models. We now discuss two useful classes of extensions.

The first extension is on the sampling of latent Gaussian models \citep{robert_monte_2004,fahrmeir_multivariate_2010,gelman_bayesian_2014}, whose likelihood generally takes the form:
\[
L(y;\theta) \propto \int  |{M}({\theta})|^{1 / 2} \exp \left[-\frac{1}{2} z^T M({\theta})z \right ]\prod_{i=1}^n  g\left(y_i \mid z_i, \theta\right) dz,
\]
with $g$ some density/mass function for the data $y$, and $M(\theta)$ is a positive definite matrix that depends on $\theta$ as the parameter of interest. This framework covers a wide range of applications such as spatial, regression, or dynamic modeling. Although it seems intuitive to simply sample $z$ as a latent variable during MCMC, the presence of non-Gaussian $g$ often disrupts posterior conjugacy, creating difficulty in efficiently updating $z$ especially when $z$ is in high dimension. This is one major motivation for posterior approximation methods such as integrated nested Laplace approximation (INLA) \citep{rue2009approximate} as an alternative to MCMC.

Using the anti-correlation Gaussian $(r \mid z,\theta)\sim N \{ [dI- M(\theta)]z, dI- M(\theta)\}$, we can obtain conditionally independent $z_i$ that can be updated efficiently in a block. 
\[
\Pi(z\mid r,\theta, y) \propto  \prod_{i=1}^n g\left(y_i \mid z_i, \theta\right) \exp( - \frac{1}{2} d z_i^2 + z_ir_i)
\]
Due to the algorithmic similarity to the sampling of the soft-thresholded Gaussian process model to be presented in the numerical experiments, we omit further detail on the sampling for latent Gaussian models.

The second extension is on the sampling of truncated multivariate Gaussian:
\[\label{pdf:truncMVN}
\Pi(\theta \mid \mu,\Sigma, R) \propto  \exp \left [  -\frac{1}{2} (\theta-\mu)^{\rm T} \Sigma^{-1} (\theta-\mu) \right ] 1(\theta\in R)
\]
where $\mu\in \mathbb{R}^p$, $\Sigma$ is positive definite, and $R$ some constrained set of dimension $p$. Despite being commonality in statistical application, the truncated multivariate Gaussian can pose a challenge for sampling, especially in high dimension \citep{Wilhelm2010tmvtnormAP,chow2014preconditioned}, due to having correlation and truncation at the same time. Using the anti-correlation Gaussian $(r \mid \theta,\mu, \Sigma)\sim N [ (dI- \Sigma^{-1}) (\theta-\mu), dI- \Sigma^{-1}]$, we have 
\[
\Pi(\theta \mid \mu,r ) \propto 1(\theta\in R) \prod_{j=1}^p\exp[ - \frac{1}{2} d(\theta_j-\mu_j)^2 + (\theta_j-\mu_j)r_j ].
\]
When the constraints in $R$ are separable over each sub-dimension, then $\theta_j$ is conditionally independent over $j$. For example, with simple box constraint $R=\cap_{j=1}^p\{x: l_j<x_j<r_j\}$, then each $\theta_j$ follows a univariate truncated Gaussian and $\theta$ can be sampled in a block. We give a numerical example in Appendix \ref{sec:truncMVN}.

\section{Numerical Examples}
We now use several numerical examples to show the computational efficiency of the anti-correlation blocked Gibbs sampler.

\subsection{Linear Regression}\label{sec:lin_reg}

 We first consider the setting where $M$ is non-diagonal. We use
 linear regression with $y_i\sim\text{N}(x_i^{\rm T}\theta,\sigma^2),$ $x_i\in \mathbb{R}^{p}$ simulated from a multivariate Gaussian with mean zero, and correlation $\rho^{|j-j'|}$ between $x_{i,j}$ and $x_{i,j'}$. During data generation, in each experiment, we set the ground truth $\sigma^2=1$ and $\theta$ as one of the vectors from a range of signal-to-noise ratios, as previously used by
  \cite{yang2016computational}. Specifically, we set the first 10 elements of $\theta$ to
 \(
 c\sqrt{\frac{\sigma^2\log(p)}{n}}(2,-3,2,2,-3,3,-2,3,-2,3)^T,
 \)
 where $c$ is the selected signal-to-noise ratio, taken from $\{1,2,3,6\}$, with $c=1$ in a low-to-moderate signal regime, and $c\geq2$ in a strong signal regime \citep{yang2016computational}. The other elements of $\theta$ are set to zero. For the prior specification, we use $\beta_j\stackrel{indep}\sim \text{N}(0, \tau_j)$, with $\tau_j\sim \text{Inverse-Gamma}(a_j,b_j)$, and $\kappa_0\sim \text{Exp}(\lambda)$ (we use rate parameterization for all inverse-gamma and exponential distributions). We choose the exponential prior for $\lambda$ as a weakly-informative prior, so that the $\kappa_0$ should be relatively small as a priori,  while the light tail of exponential allows the posterior to have potentially large $\kappa_0$. For the noise variance, we use $\sigma^2\sim \text{Inverse-Gamma}(a_\sigma,b_\sigma)$. In the experiments, we use $n=300$, $p\in\{10,50,500\}$, $\rho\in\{0.5,0.9\}$, $a_j=5$, $b_j=1$, $a_\sigma=1$, $b_\sigma=1$ and $\lambda=1$.

For benchmarking, we compare our algorithm against three other samplers: (i) the No-U-Turn (NUTS) sampler \citep{hoffman2014no}; (ii) the component-wise slice sampler \citep{neal2003slice} that updates one element at a time; (iii) the empirical principal component-based slice sampler \citep{thompson2011slice} with randomized directions. To provide more details about (iii), we use an empirical estimate of the covariance between all parameters (based on running 200 iterations of the component-wise slice sampler during an adaptation period, and the adaptation time cost is excluded from the calculation of computing efficiency), take the first $10$ principal components, and run the slice sampler along each direction under slight perturbation (by adding Gaussian noise $\text{N}(0,1/4p)$ to each element of direction, then rescaling the perturbed direction vector to have unit 2-norm). For (ii) and (iii), we use the stepping out and shrinkage procedures  \citep{neal2003slice} to draw a subset of slices.

We conduct 10 times of repeated experiments under each setting, with different random seed numbers. At $\rho=0.5$, we run MCMC for 10,000 iterations, and treat the first 2,000 as burn-ins; at $\rho=0.9$, since the high pairwise correlation slightly slows down the mixing, we run MCMC for 20,000 iterations, and treat the first 10,000 as burn-ins. We record the total wall clock time (in seconds), and calculate the effective sample size. Since the popular NUTS sampler implementation, Stan \citep{carpenter2017stan} is coded in C++, while all the other three algorithms are done in R, to have a fair comparison of the algorithmic efficiency,
we estimate the running time of NUTS using a pure R implementation from the `hmclearn' package \citep{thomas2021learning}. Specifically, we first obtain the total number of leap-frog steps from the diagnostics of Stan program, then run the same number of leap-frog steps in `hmclearn' and record wall clock time. 
\begin{figure}[H]
\begin{subfigure}[t]{.24\textwidth}
    \begin{overpic}[width=\textwidth]{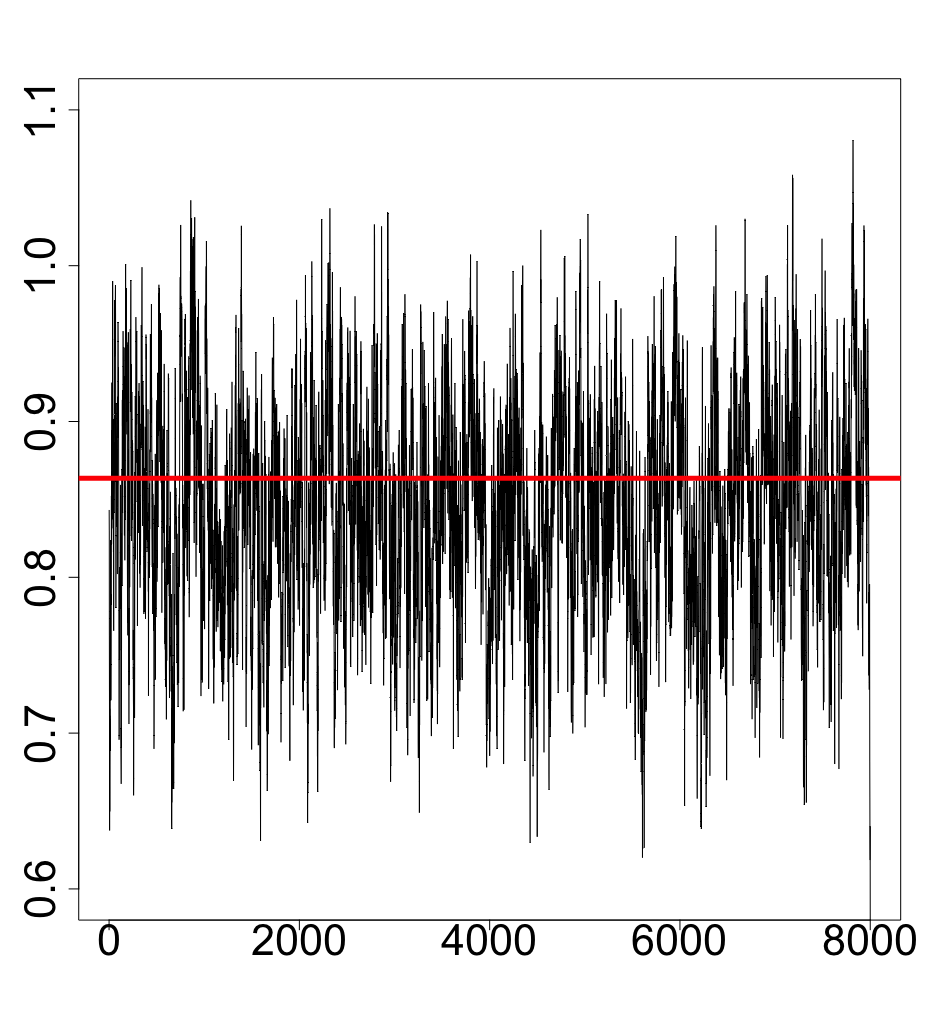}
        \put(40, -1){\scriptsize Iteration}
    \end{overpic}
    \caption{\scriptsize Trace plot of $\theta_1$  Anti-corr Gaussian.}
\end{subfigure}
\begin{subfigure}[t]{.24\textwidth}
    \begin{overpic}[width=\textwidth]{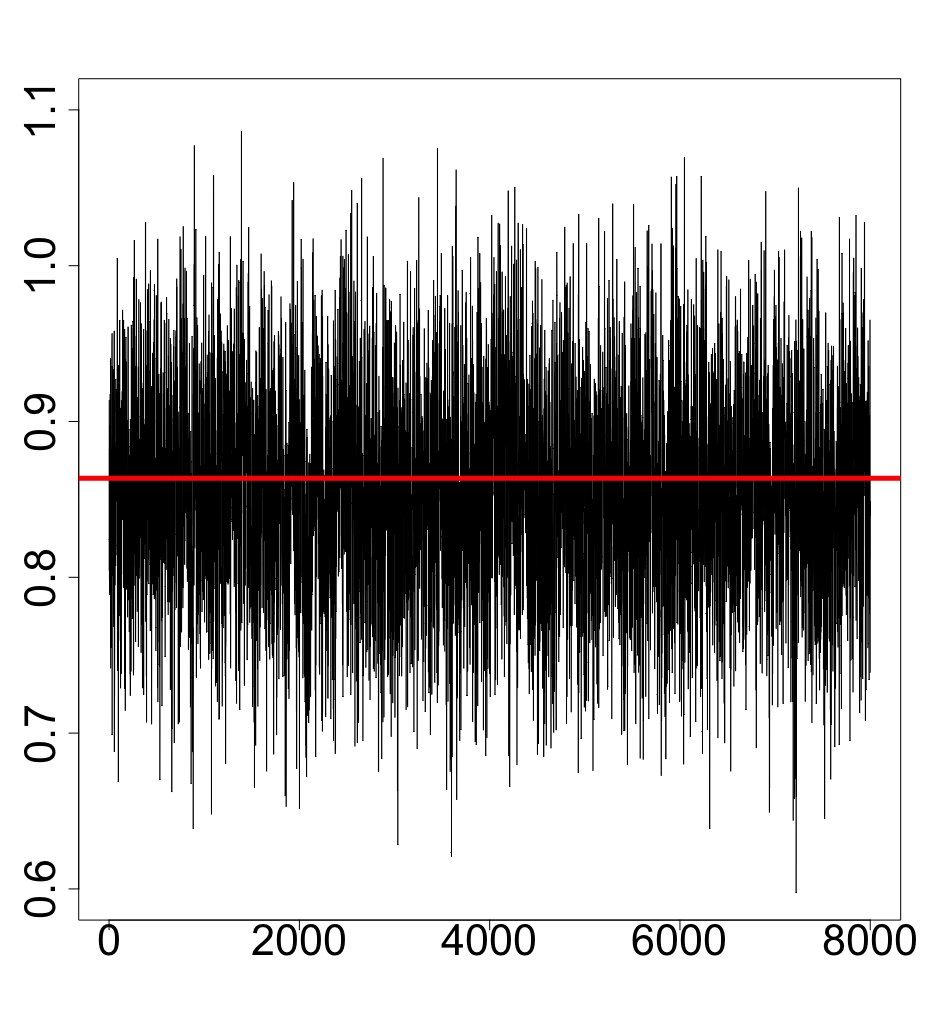}
    \put(40, -1){\scriptsize Iteration}
    \end{overpic}
    \caption{\scriptsize Trace plot of $\theta_1$ for NUTS.}
\end{subfigure}
\begin{subfigure}[t]{.24\textwidth}
    \begin{overpic}[width=\textwidth]{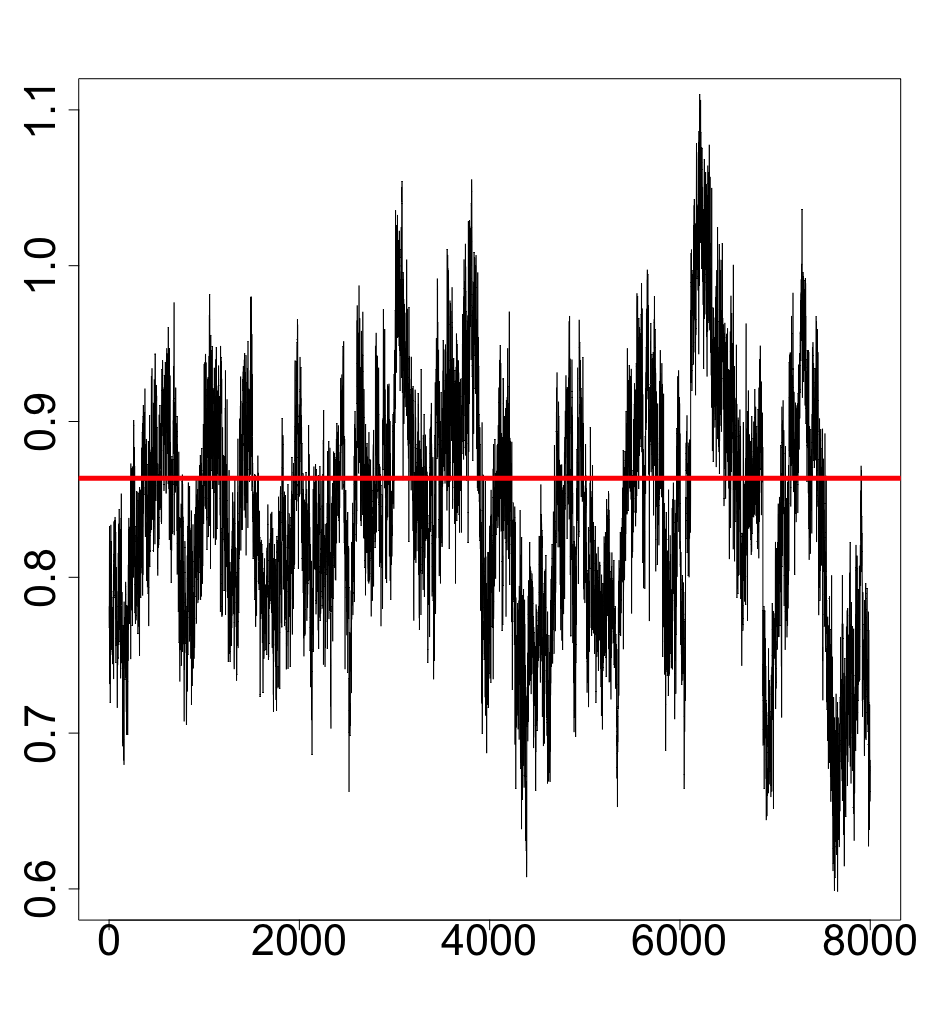}
    \put(40, -1){\scriptsize Iteration}
    \end{overpic}
    \caption{\scriptsize Trace plot of $\theta_1$ for EPC slice.}
\end{subfigure}
\begin{subfigure}[t]{.24\textwidth}
    \begin{overpic}[width=\textwidth]{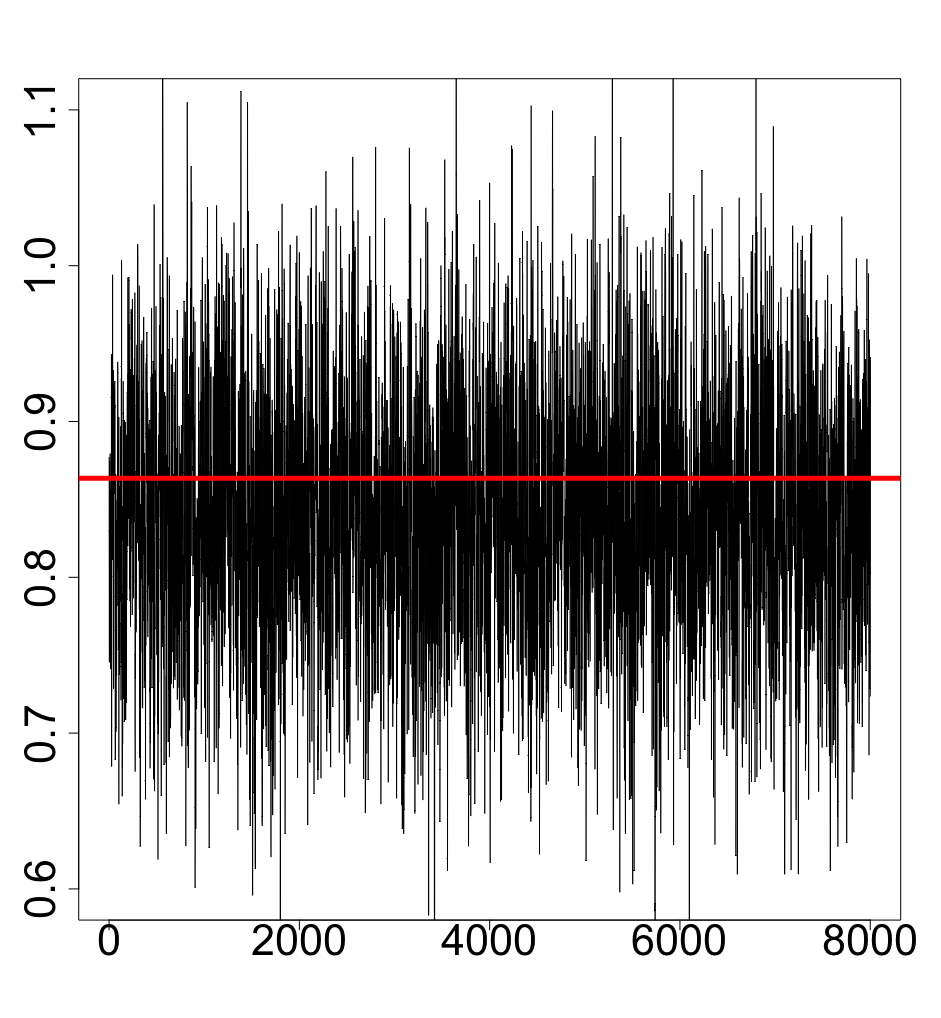}
    \put(40, -1){\scriptsize Iteration}
    \end{overpic}
    \caption{\scriptsize Trace plot of $\theta_1$ for Comp-wise slice.}
\end{subfigure}
\caption{Trace plots for the four algorithms: anti-correlation Gaussian (Anti-corr Gaussian), No-U-Turn Sampler (NUTS), empirical principal component-based slice sampler (EPC slice), and component-wise slice sampler (Comp-wise slice). The horizontal lines are the ground truth.\label{fig:tp1}}
\end{figure}

To compare the algorithms in terms of variable selection and estimation accuracy, we report the false positive rates (FPR), false negative rates (FNR) and mean squared errors (MSE) for estimating regression coefficients $\theta$ in Tables \ref{tb:FPRandFNR} and \ref{tb:MSE}. {The FPR and FNR are computed based on the element-wise $95\%$ credible intervals of $\theta_j$'s. The FPR is the proportion of the intervals corresponding to the ground truth zero that do not cover zero, while the FNR is the proportion of the intervals corresponding to the non-zero ground truth that cover zero.} When $c=1$, the signal is not strong, and all algorithms fail to produce good results. At $\rho=0.9$, the correlations within the design matrix are strong, and anti-correlation blocked Gibbs sampler outperforms all the other algorithms. At $\rho=0.5$ and $c\geq2$, the anti-correlation Gaussian, the NUTS, and the component-wise slice sampler work similarly well, while the EPC slice sampler gives bad results when $p=500$. We find similar results in MSE.

\begin{table}[H]
\centering
\begin{tabular}{|l|l|l|l|l|l|}
\hline
                   &                    & $c=1$ & $c=2$ & $c=3$ & $c=6$ \\ \hline
           &                    & FPR,FNR & FPR,FNR & FPR,FNR & FPR,FNR \\ \hline\hline
\multirow{4}{*}{$p=$10, $\rho=$0.5}  & Anti-corr Gaussian & -, 65  & -, 2   & -, 0  & -, 0   \\ \cline{2-6}
                            & NUTS               & -, 49  & -, 0   & -, 0  & -, 0  \\ \cline{2-6}
                            & Comp-wise slice    & -, 96  & -, 7   & -, 0  & -, 0  \\ \cline{2-6}
                            & EPC slice          & -, 56  & -, 6   & -, 0  & -, 0  \\ \hline
\multirow{4}{*}{$p=$50,  $\rho=$0.5}  & Anti-corr Gaussian & 0, 67  & 0, 0   & 0, 0  & 0, 0   \\ \cline{2-6}
                            & NUTS               & 0, 71  & 0, 1   & 0, 0  & 0, 0  \\ \cline{2-6}
                            & Comp-wise slice    & 0, 87  & 0, 5   & 0, 0  & 0, 0   \\ \cline{2-6}
                            & EPC slice          & 0, 79  & 0, 0   & 0, 0  & 0, 0   \\ \hline
\multirow{4}{*}{$p=$500,  $\rho=$0.5} & Anti-corr Gaussian & 0, 65  & 0, 0   & 0, 0  & 0, 0  \\ \cline{2-6}
                            & NUTS               & 0, 72  & 0, 0   & 0, 0  & 0, 0    \\ \cline{2-6}
                            & Comp-wise slice    & 0, 87  & 0, 0   & 0, 0  & 0, 0   \\ \cline{2-6}
                            & EPC slice          & 0, 95  & 0, 81  & 0, 52 & 0, 23  \\ \hline
\multirow{4}{*}{$p=$10,  $\rho=$0.9}  & Anti-corr Gaussian & -, 100 & -, 55  & -, 0  & -, 0   \\ \cline{2-6}
                            & NUTS               & -, 99  & -, 69  & -, 31 & -, 0 \\ \cline{2-6}
                            & Comp-wise slice    & -, 100 & -, 100 & -, 94 & -, 73 \\ \cline{2-6}
                            & EPC slice          & -, 98  & -, 64  & -, 21 & -, 0 \\ \hline
\multirow{4}{*}{$p=$50,  $\rho=$0.9}  & Anti-corr Gaussian & 0, 94  & 0, 74  & 0, 0  & 0, 0  \\ \cline{2-6}
                            & NUTS               & 0, 100 & 0, 87  & 0, 49 & 0, 0   \\ \cline{2-6}
                            & Comp-wise slice    & 0, 98  & 0, 95  & 0, 89 & 0, 60   \\ \cline{2-6}
                            & EPC slice          & 0, 89  & 1, 80  & 2, 39 & 3, 5   \\ \hline
\multirow{4}{*}{$p=$500,  $\rho=$0.9} & Anti-corr Gaussian & 0, 86  & 0, 77  & 0, 0  & 0, 0   \\ \cline{2-6}
                            & NUTS               & 0, 96  & 0, 85  & 0, 20 & 0, 0   \\ \cline{2-6}
                            & Comp-wise slice    & 0, 90  & 0, 90  & 0, 60 & 0, 50  \\ \cline{2-6}
                            & EPC slice          & 0, 87  & 0, 82  & 0, 81 & 0, 75  \\ \hline
\end{tabular}
\caption{False positive rates (FPR) and false negative rates (FNR) (measured in $\%$) for the four algorithms: anti-correlation Gaussian (Anti-corr Gaussian), No-U-Turn Sampler (NUTS), empirical principal component-based slice sampler (EPC slice), and component-wise slice sampler (Comp-wise slice). When $p=10$, there is no FPR as all ground-truth $\theta_j$'s are non-zero.}
\label{tb:FPRandFNR}
\end{table}

\begin{table}[H]
\centering
\begin{tabular}{|c|c|c|c|c|c|}
\hline
             &                    & $c=1$  & $c=2$  & $c=3$  & $c=6$ \\ \hline
\multirow{4}{*}{$p=$10, $\rho=$0.5}   & Anti-corr Gaussian & 0.0082 & 0.0056 & 0.0058 & 0.0063  \\ \cline{2-6}
                            & NUTS               & 0.0078 & 0.0042 & 0.0067 & 0.0059   \\ \cline{2-6}
                            & Comp-wise slice    & 0.0192 & 0.0045 & 0.0047 & 0.0067   \\ \cline{2-6}
                            & EPC slice          & 0.0118 & 0.0072 & 0.0092 & 0.0054   \\ \hline
\multirow{4}{*}{$p=$50, $\rho=$0.5}   & Anti-corr Gaussian & 0.0041 & 0.0018 & 0.0015 & 0.0017  \\ \cline{2-6}
                            & NUTS               & 0.0057 & 0.0013 & 0.0014 & 0.0013   \\ \cline{2-6}
                            & Comp-wise slice    & 0.0053 & 0.0014 & 0.0016 & 0.0016  \\ \cline{2-6}
                            & EPC slice          & 0.0101 & 0.0021 & 0.0020 & 0.0018  \\ \hline
\multirow{4}{*}{$p=$500, $\rho=$0.5}  & Anti-corr Gaussian & 0.0009 & 0.0002 & 0.0002 & 0.0001  \\ \cline{2-6}
                            & NUTS               & 0.0009 & 0.0002 & 0.0002 & 0.0001   \\ \cline{2-6}
                            & Comp-wise slice    & 0.0025 & 0.0003 & 0.0001 & 0.0003   \\ \cline{2-6}
                            & EPC slice          & 0.0026 & 0.0087 & 0.0126 & 0.0217   \\ \hline
\multirow{4}{*}{$p=$10, $\rho=$0.9}  & Anti-corr Gaussian & 0.0379 & 0.0358 & 0.0597 & 0.0810   \\ \cline{2-6}
                            & NUTS               & 0.0329 & 0.0515 & 0.0561 & 0.0628   \\ \cline{2-6}
                            & Comp-wise slice    & 0.0402 & 0.1270 & 0.2410 & 0.3436   \\ \cline{2-6}
                            & EPC slice          & 0.0351 & 0.0572 & 0.0498 & 0.0791  \\ \hline
\multirow{4}{*}{$p=$50, $\rho=$0.9}  & Anti-corr Gaussian & 0.0150 & 0.0310 & 0.0217 & 0.0216   \\ \cline{2-6}
                            & NUTS               & 0.0137 & 0.0249 & 0.0262 & 0.0183   \\ \cline{2-6}
                            & Comp-wise slice    & 0.0144 & 0.0489 & 0.0886 & 0.1316   \\ \cline{2-6}
                            & EPC slice          & 0.0140 & 0.0458 & 0.0578 & 0.0516   \\ \hline
\multirow{4}{*}{$p=$500, $\rho=$0.9} & Anti-corr Gaussian & 0.0023 & 0.0074 & 0.0014 & 0.0020  \\ \cline{2-6}
                            & NUTS               & 0.0019 & 0.0068 & 0.0009 & 0.0005  \\ \cline{2-6}
                            & Comp-wise slice    & 0.0024 & 0.0082 & 0.0128 & 0.0116   \\ \cline{2-6}
                            & EPC slice          & 0.0025 & 0.0099 & 0.0210 & 0.0742  \\ \hline
\end{tabular}
\caption{Mean squared error (MSE) for estimating $\theta$ via the four algorithms under different settings. }
\label{tb:MSE}
\end{table}

At $c=3$ and $\rho=0.5$, we report in Table \ref{tb:running_time} the average running time for 1,000 iterations. The anti-correlation Gaussian is the fastest, followed by EPC slice, component-wise slice, and NUTS. As shown in Figure \ref{fig:tp1}, the NUTS algorithm has the fastest mixing; however, this is at the cost of expensive computation per iteration. To further illustrate the advantage of the anti-correlation blocked Gibbs sampler in terms of computational efficiency, we compare the effective sample size per computing time in Table \ref{tb:ESS_s}. It can be seen that the anti-correlation Gaussian has the highest effective sample size per computing time.

\begin{table}[H]
\centering
\begin{tabular}{|c|c|c|c|c|}
\hline
$p$  & Anti-corr Gaussian & NUTS & Comp-wise slice & EPC slice \\ \hline
10   &         \textbf{0.41}           &   26.18   &       3.61          &      3.04     \\ \hline
50   &         \textbf{0.93}           &   108.33   &       31.44          &      4.16     \\ \hline
500  &         \textbf{5.60}           &   22875.74   &        1121.47         &     16.14      \\ \hline
\end{tabular}
\caption{ Running time for 1,000 iterations for the four algorithms.
The time unit is in seconds based on pure R implementation for each algorithm.
}
\label{tb:running_time}
\end{table}

\begin{table}[H]
\centering
\begin{tabular}{|c|c|c|c|c|}
\hline
($p$, $\rho$) & Anti-corr Gaussian & NUTS & Comp-wise slice & EPC slice \\ \hline
(10, 0.5)     &         \textbf{202.36, 265,56}           &  8.82, 7.05    &     49.08, 60.31            &      5.19, 8.23     \\ \hline
(50, 0.5)     &         \textbf{62.86, 165.74}           &  1.52, 2.46    &      5.12, 4.61           &      4.34, 5.98     \\ \hline
(500, 0.5)    &        \textbf{4.81, 34.95}            &  0.01, 0.01    &      0.10, 0.14           &      2.36, 7.58     \\ \hline
(10, 0.9)     &        \textbf{31.19, 35.79}            &  4.22, 3.89    &       No convergence          &     3.61, 4.48      \\ \hline
(50, 0.9)     &        \textbf{11.28, 15.18}            &   0.53, 1.05   &        No convergence       &    No convergence       \\ \hline
(500, 0.9)    &        \textbf{3.05, 14.50}            &  $<$0.01, $<$0.01    &      No convergence           &     No convergence      \\ \hline
\end{tabular}
\caption{Effective sample size per computing time (ESS/s) for the four algorithms. In each cell, the first number is the average ESS/s for the first 10 entries, and the second number is the average ESS/s for the rest entries. Each ESS/s is reported as the average result of 10 repetitions with a different seed number. The two slice samplers fail to converge to the ground truth in some settings within $20,000$ iterations.}
\label{tb:ESS_s}
\end{table}

\begin{figure}[H]
\centering
\begin{subfigure}[t]{0.75\textwidth}
    \begin{overpic}[width=\textwidth]{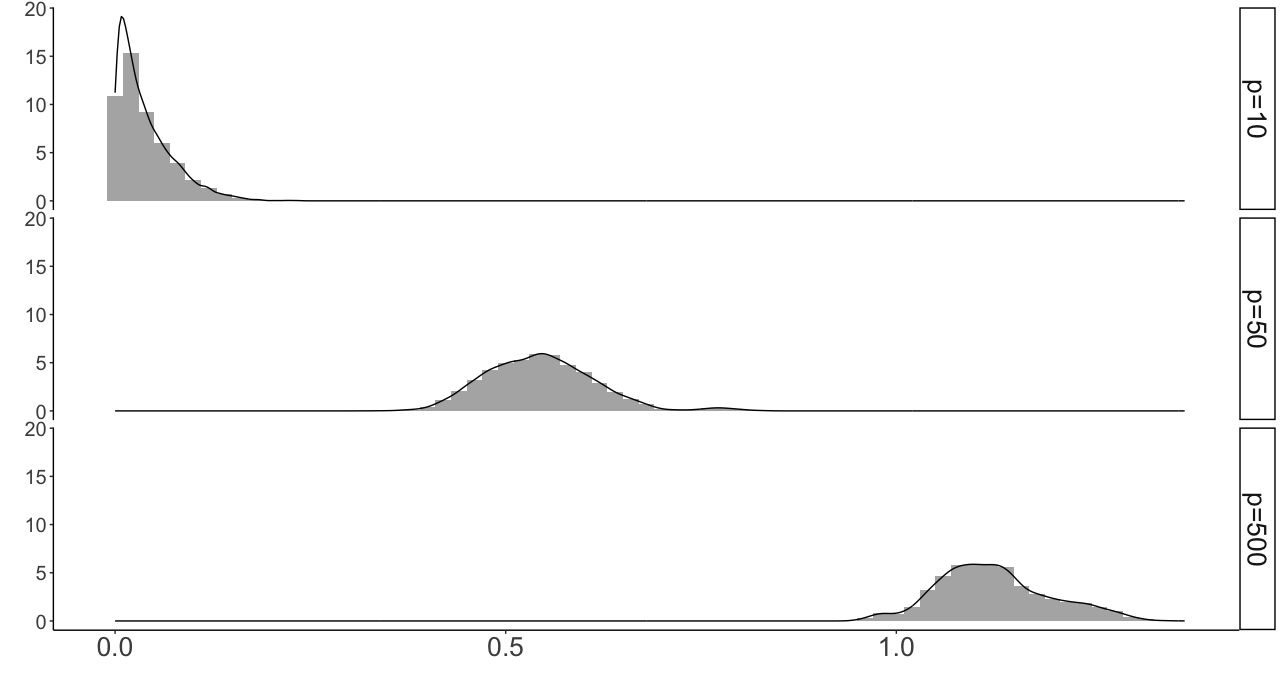}
    \put(40, -1){$\kappa_0$}
    \put(-1,25){\rotatebox{90}{\scriptsize Density}}
    \end{overpic}
\end{subfigure}
\caption{Posterior density estimations of $\kappa_0$ for varying dimensions $p$ when $(c,\rho)=(3,0.5)$.}
\label{fig:dist_kappa}
\end{figure}

Figure \ref{fig:dist_kappa} shows the posterior distribution of $\kappa_0$ under different values of $p$. As $p$ increases, the mode of the posterior density of $\kappa_0$ increases, as a result of the higher level of sparsity.

\subsection{Application on Soft-Thresholded Gaussian Process}
We next focus on the setting with non-diagonal $H$. We consider an application of the soft-thresholded Gaussian process \citep{kang2018scalar}:
\(
   \theta_s = \text{sign}(\beta_s) \circ(|\beta_s|-\kappa_0)_+,
\qquad  \beta \sim \text{GP} [0, K(\cdot, \cdot)],
\)
where $s$ is the location in some input location space $\mathbb{S}$, $\kappa_0>0$ is a scalar, and $\text{GP}$ denotes a Gaussian process from which any finite-dimensional realization follows a multivariate  (potentially degenerate) Gaussian with zero mean, and covariance parameterized by $\text{Cov}(\beta_s, \beta_{s'})=K(s,s')$. With an appropriate choice of $K$, we can obtain useful properties on a GP realization,  such as continuity or smoothness over $s$. After applying soft-thresholding, we preserve the continuity over $\mathbb{S}$, and smoothness in each open set where $\theta_s\neq 0$. Therefore, the above is useful for obtaining a ``smooth {\em and} sparse'' parameterization for $\theta$. In this article, we consider a simple but useful application in image smoothing, in the form of:
\(
y_s = \theta_s + \epsilon_s, \qquad \epsilon_s \stackrel{iid} \sim \text{N}(0,\sigma^2),
\)
where $s$ is the pixel location, $s=(i,j)$ with $i=1,\ldots, n_1$ and $j=1,\ldots, n_2$. We set the covariance function as $K(s,s')=\tau\exp[-{\|s-s'\|^2_2}/(2\xi^2)]$.

In application, we use the functional magnetic resonance imaging scan of one human subject who was performing a motor task. We take a sectional view along the anterior-posterior axis of the brain, corresponding to $91\times 91$ pixels per frame, and over a time period, we collect $280$ frames. Using superscript $f$ to index each frame, we assume $y^f =\{y_s^f\}_{\text{all }s}$ to be independent over $f$ with frame-varying mean $\theta^f=\{\theta_s^f\}_{\text{all } s}$, but the other parameters $\sigma^2$, $\tau$, $\xi$ and $\kappa_0$ are invariant over $f=1,\ldots, 280$. To finish the prior specification, we use $\tau\sim\text{Inverse-Gamma}(0.1,0.1)$, $\sigma^2\sim\text{Inverse-Gamma}(0.1,0.1)$ and $\kappa_0\sim\text{Exp}(0.5)$.  \cite{tikhonov2020joint} has previously shown that when the goal of using GP is to obtain smoothing of data, estimating the bandwidth $\xi$ under a reasonable degree of precision is empirically adequate. Specifically, they impose a simple discrete prior on $\xi$ supported on a finite set of values, so that all possible precision matrix $H$ and its determinant could be pre-computed, hence further facilitating the computation. We follow their solution here, with $\xi =0.5 \tilde t $ and $\tilde t$ a discrete uniform prior supported on $\{1,\ldots, 20\}$.

\begin{figure}[H]
\begin{subfigure}[t]{0.27\textwidth}
    \begin{overpic}[width=\textwidth]{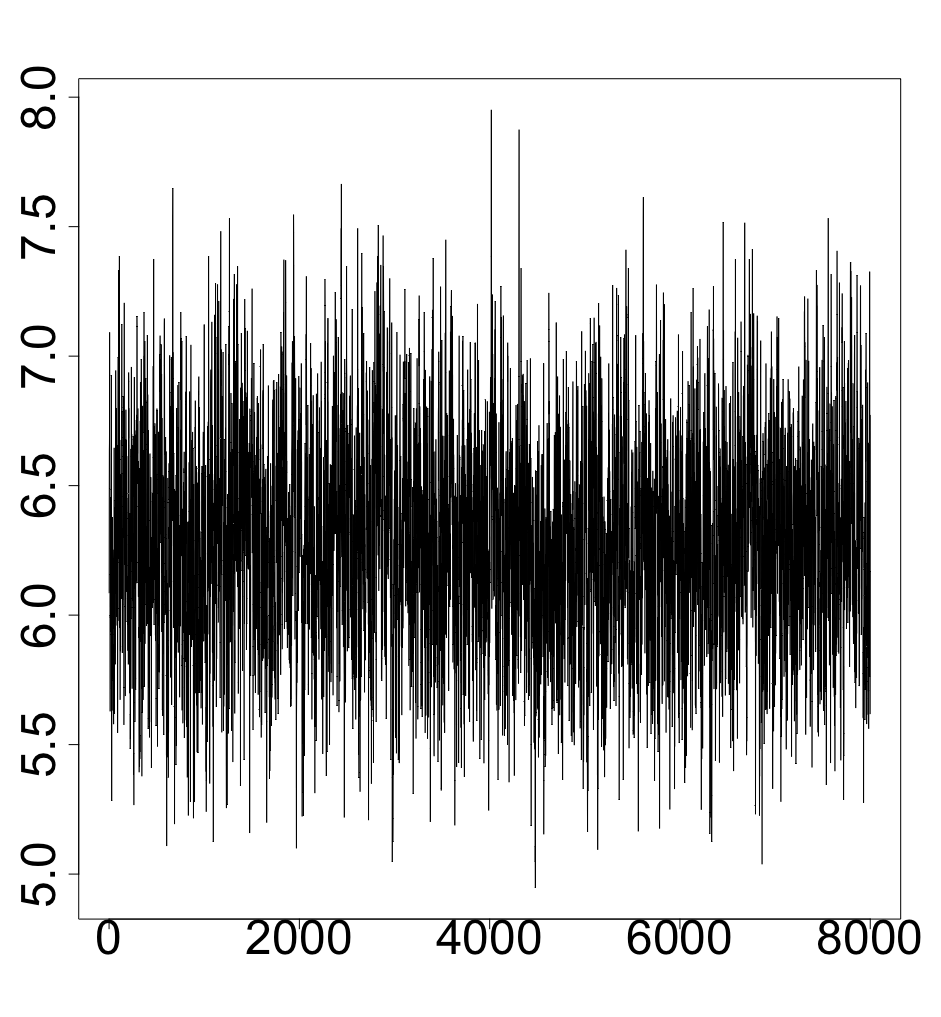}
    \put(40, -1){\scriptsize Iteration}
    \end{overpic}
    \caption{Traceplot of $\theta^{100}_{3320}$.}
\end{subfigure}
\begin{subfigure}[t]{0.27\textwidth}
    \begin{overpic}[width=\textwidth]{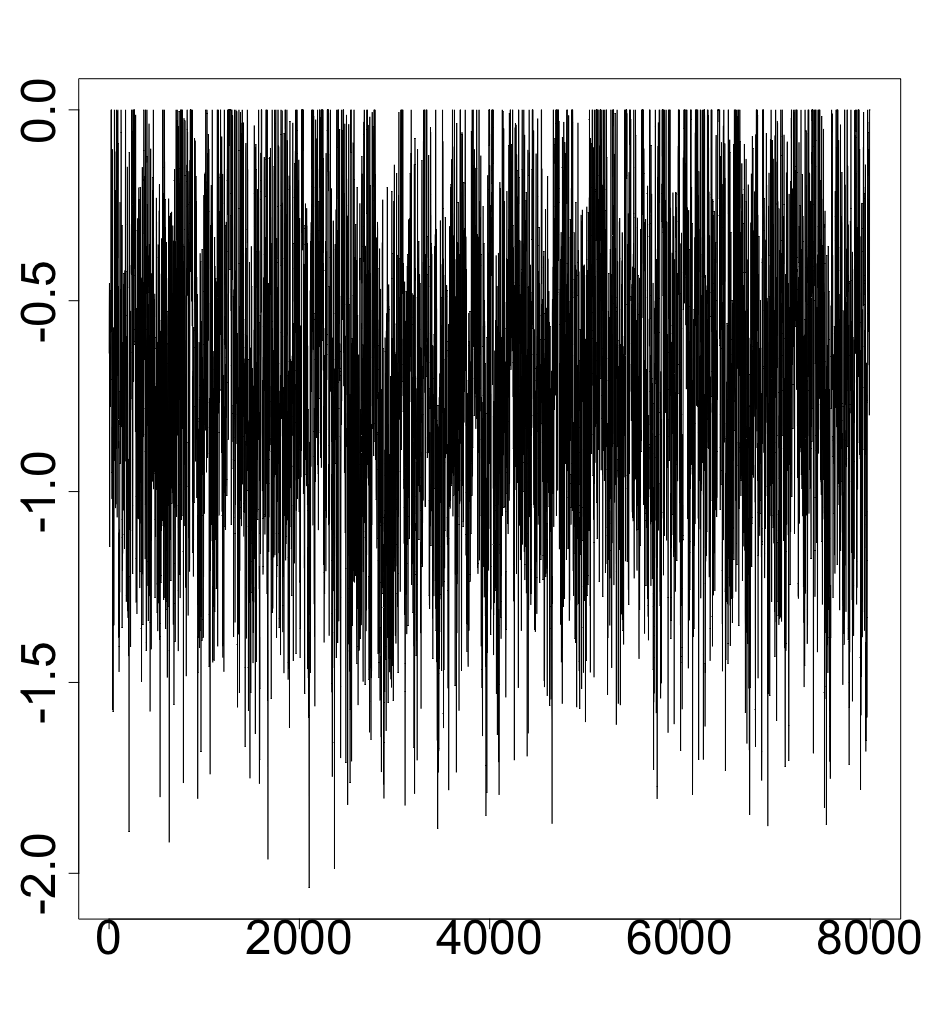}
    \put(40, -1){\scriptsize Iteration}
    \end{overpic}
    \caption{Traceplot of $\theta^{100}_{6700}$.}
\end{subfigure}
\begin{subfigure}[t]{0.41\textwidth}
    \begin{overpic}[width=\textwidth]{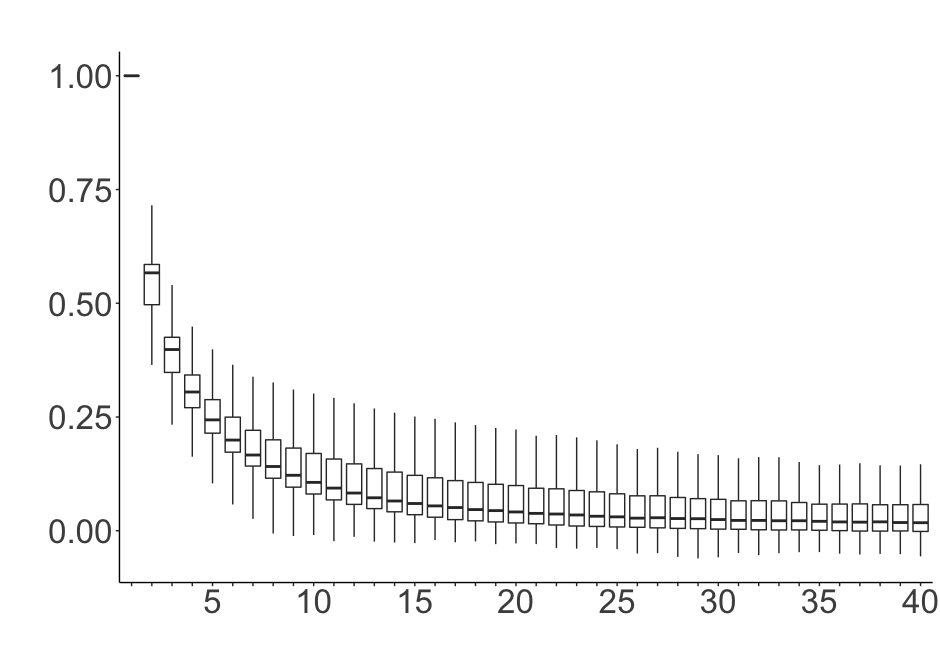}
    \put(52, 0){\scriptsize Lag}
    \put(0, 30){\rotatebox{90}{\scriptsize ACF}}
    \end{overpic}
    \caption{ACF plot of $\theta^{100}$.}
\end{subfigure}
\caption{Trace plots and autocorrelation function (ACF) plot. Each box on the ACF plot incorporates the ACF for all pixels of an image.}
\label{tp:stgp}
\end{figure}

When using our proposed Gibbs sampler for posterior estimation, we sample $\beta$ using the anti-correlation Gaussian data augmentation technique, sample $\sigma^2$ and $\tau$ from their corresponding inverse-gamma full conditional distributions, and sample each of $\xi$ and $\kappa_0$ using the slice sampling algorithm. 
We run $10,000$ iterations with the first $2,000$ as burn-ins, and it takes about 270 minutes
on a MacBook Pro with a 10-core CPU. We report the effective sample size per hour for four randomly selected locations of $\theta^{100}$ at $f=100$: $315.14$ at $s=3300$, $315.44$ at $s=3320$, $268.95$ at $s=3720$, and $227.16$ at $s=6700$. To further assess the mixing, we plot both the trace plots at two locations and an autocorrelation function (ACF) plot in Figure \ref{tp:stgp}. The trace plots show fast mixing and the ACF drops quickly.

\begin{figure}[H]
\begin{subfigure}[t]{.32\textwidth}
    \begin{overpic}[width=\textwidth]{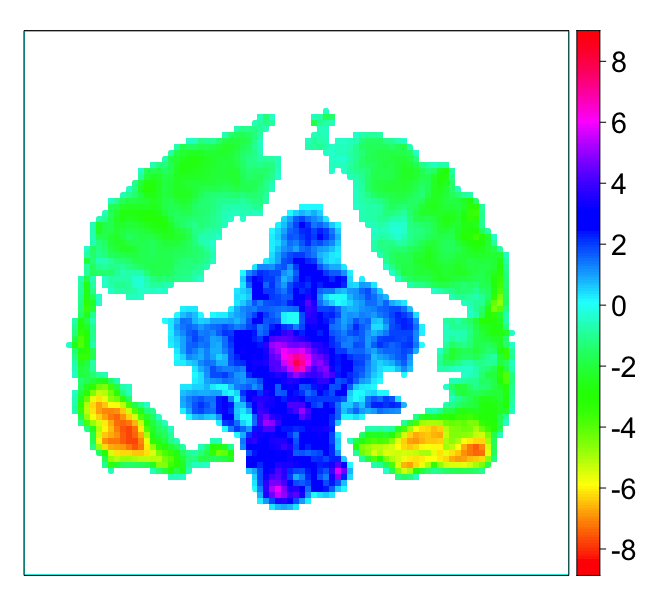}
    \end{overpic}
    \caption{Posterior estimate $\hat{\theta}^{50}$.}
\end{subfigure}
\begin{subfigure}[t]{.32\textwidth}
    \begin{overpic}[width=\textwidth]{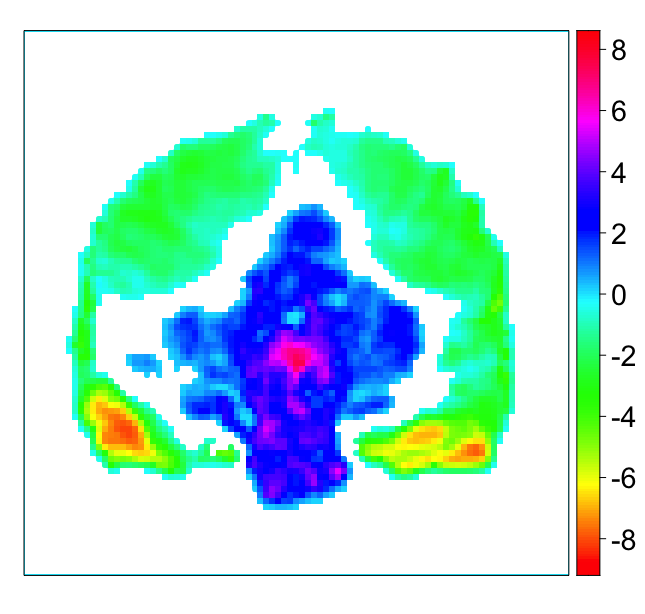}
    \end{overpic}
    \caption{Posterior estimate $\hat{\theta}^{100}$.}
\end{subfigure}
\begin{subfigure}[t]{.32\textwidth}
    \begin{overpic}[width=\textwidth]{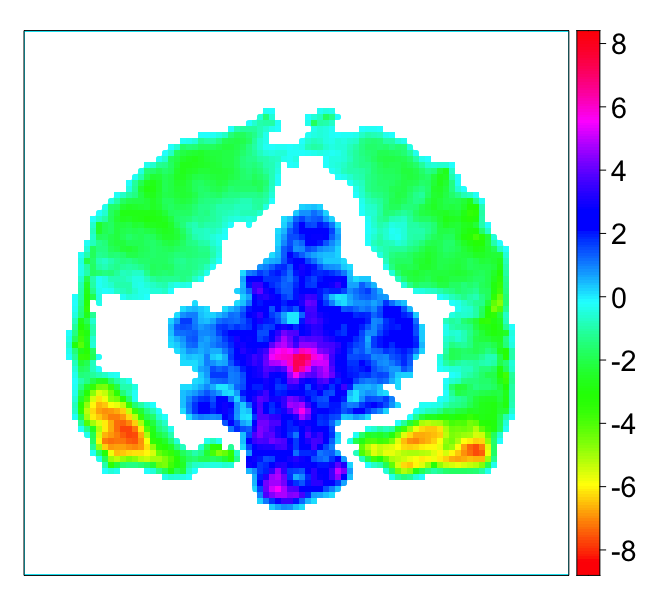}
    \end{overpic}
    \caption{Posterior estimate $\hat{\theta}^{200}$.}
\end{subfigure}
\begin{subfigure}[t]{.32\textwidth}
    \begin{overpic}[width=\textwidth]{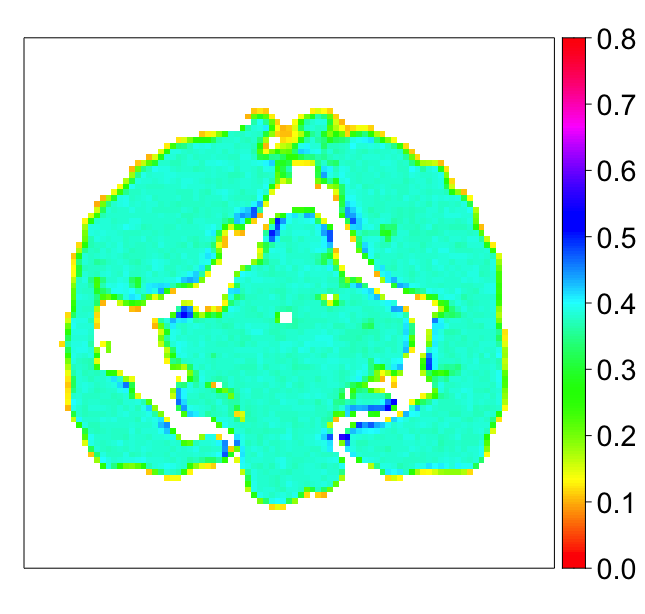}
    \end{overpic}
    \caption{Point-wise standard deviations for $\hat{\theta}^{50}$.}
\end{subfigure}
\begin{subfigure}[t]{.32\textwidth}
    \begin{overpic}[width=\textwidth]{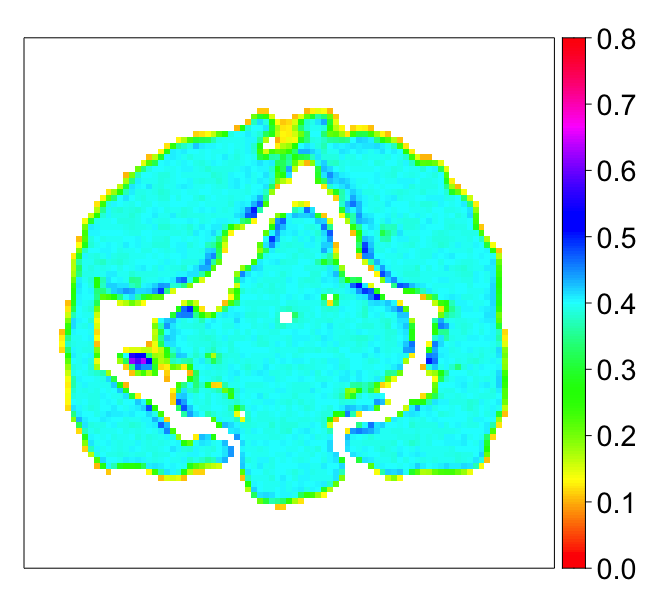}
    \end{overpic}
    \caption{Point-wise standard deviations for $\hat{\theta}^{100}$.}
\end{subfigure}
\begin{subfigure}[t]{.32\textwidth}
    \begin{overpic}[width=\textwidth]{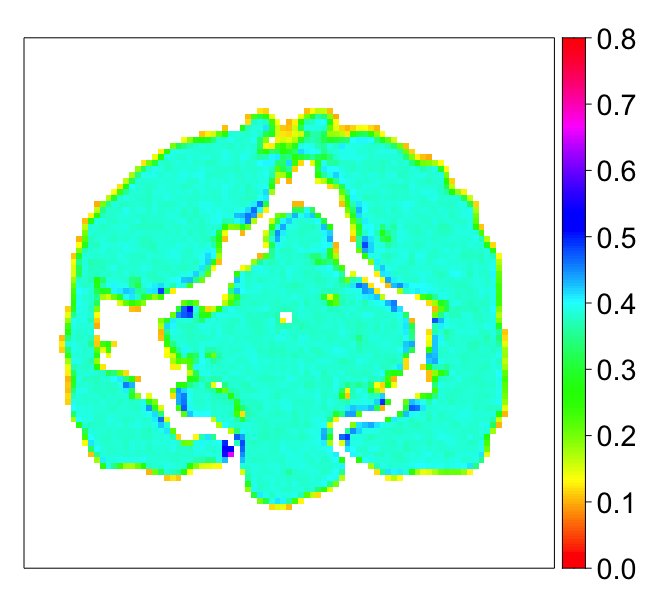}
    \end{overpic}
    \caption{Point-wise standard deviations for $\hat{\theta}^{200}$.}
\end{subfigure}

\begin{subfigure}[t]{.24\textwidth}
    \begin{overpic}[width=\textwidth]{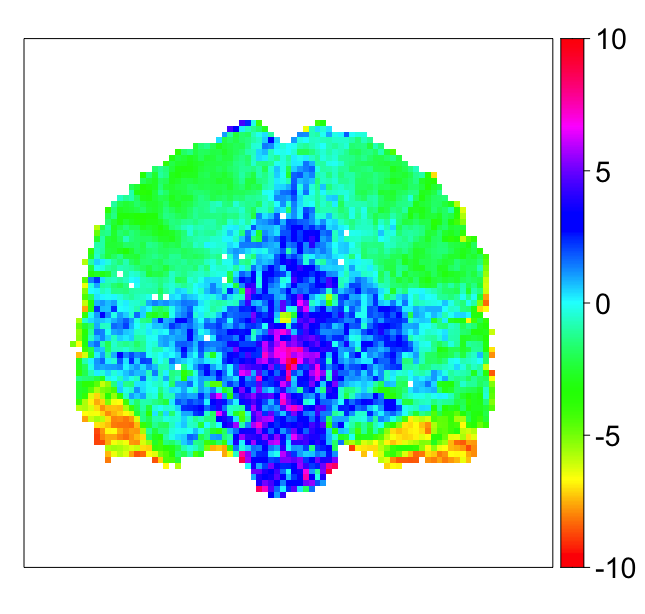}
    \end{overpic}
    \caption{Raw image $y^{100}$.}
\end{subfigure}
\begin{subfigure}[t]{.24\textwidth}
    \begin{overpic}[width=\textwidth]{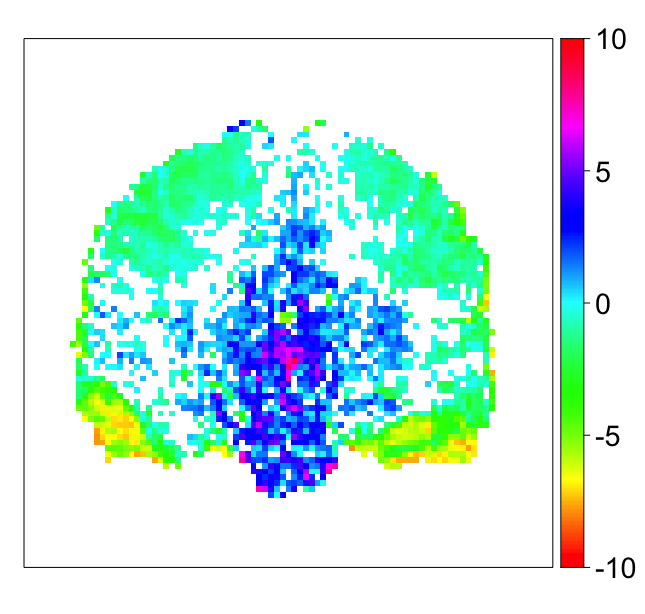}
    \end{overpic}
    \caption{Raw image $y^{100}$ after the soft-thresholding transform with threshold=1.}
\end{subfigure}
\begin{subfigure}[t]{.24\textwidth}
    \begin{overpic}[width=\textwidth]{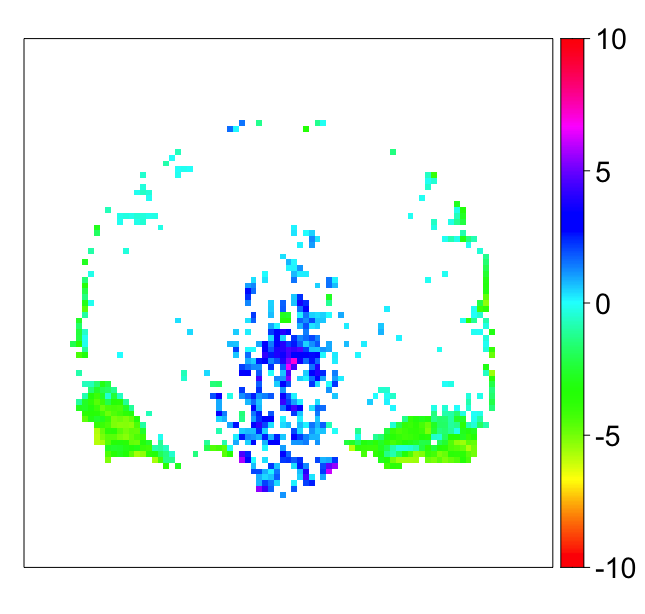}
    \end{overpic}
    \caption{Raw image $y^{100}$ after the soft-thresholding transform with threshold=3.}
\end{subfigure}
\begin{subfigure}[t]{.24\textwidth}
    \begin{overpic}[width=\textwidth]{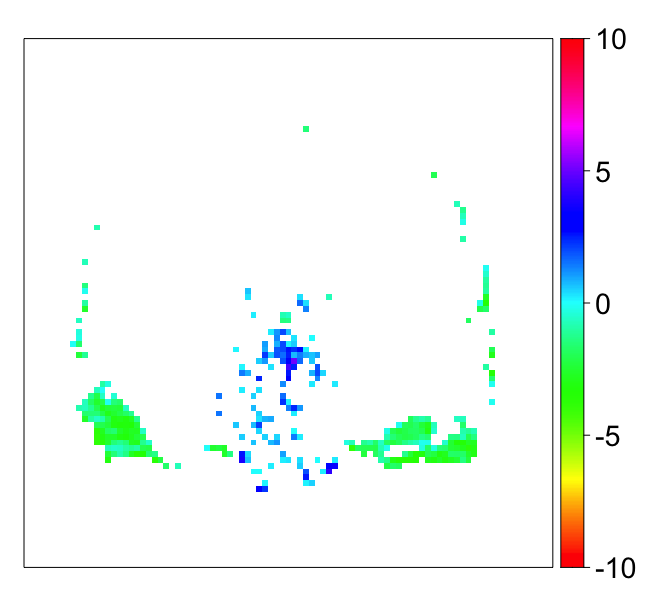}
    \end{overpic}
    \caption{Raw image $y^{100}$ after the soft-thresholding transform with threshold=5.}
\end{subfigure}
\caption{Posterior estimates, point-wise standard deviations, and direct soft-thresholding transform on a raw image. The posterior estimates by soft-thresholded GP are smooth and sparse with point-wise uncertainty estimated whereas the direct soft-thresholding transform only leads to noisy estimates.}
\label{plot:stgp}
\end{figure}

In Figure \ref{plot:stgp}, images (a)-(c) show the posterior mean estimates for three frames, and images (d)-(f) give the point-wise standard deviations. We can clearly see both smoothness and sparsity in the mean estimates, as well as how they change over time. To show the benefits of using a soft-thresholded GP, we also plot the results of simple thresholding on the raw images in (g)-(j), which lead to non-smooth estimates and scattered noisy non-zero points on the estimates.

To show how our proposed algorithm improves the computational efficiency of the soft-thresholded GP method, we also run a No-U-Turn sampler on the same model using Stan. We try NUTS with max tree depth equal to 7, it takes 68 hours to finish $10,000$ iterations, and the effective sample size per hour for estimating $\theta^{100}$ is $8.1$ on average, $9.2$ at $s=3300$, $9.3$ at $s=3320$, $7.4$ at $s=3720$, and $6.7$ at $s=6700$. Therefore,  our algorithm improves the computing speed substantially.

\section{Discussion}
The latent Gaussian form is very common in Bayesian posterior estimation. In this article, we show the advantages of using anti-correlation Gaussian to cancel out the quadratic term in the exponent.
We expect our technique to be broadly applicable beyond our presented scenarios. For example, one could potentially extend this technique to address sampling problems of Bingham distribution in Stiefel manifold  \citep{hoff2009simulation}, the costly evaluation of determinant in determinantal point process \citep{kulesza2012determinantal}, and so forth.

When it comes to linear regression variable selection using discrete spike-and-slab prior,
there is some close connection of our algorithm to the ODA algorithm
\citep{ghosh2011rao}, which uses an augmented design matrix to make the complete gram matrix diagonal. 
Since one could reparameterize a discrete spike-and-slab prior as a special case of the L1-ball prior by setting $\kappa$ according to a quantile of $\pi_0(\beta)$, we expect our geometric ergodicity result could be extended to analyzing the ODA algorithm. In comparison, our algorithm has the advantages of not having to create an augmented design matrix and easy application in non-linear models. We expect some nice properties that are further exploited in the ODA algorithm, such as the collapsed sampling step based on marginalizing out $\theta_j$ given if $b_j=1$ or $0$, could be obtained in some specific form of latent Gaussian model after we augment the anti-correlation Gaussian.

\bibliography{ref.bib}
\bibliographystyle{chicago}

\appendix

\section{Proof of Geometric Ergodicity in Linear Models}\label{geometricergodicityproof}

We wish to prove the geometric ergodicity of the Markov chain $\{r^k,t^k\}_{k=0}^\infty$. To ease the burden of notation, we denote $(r^0,t^0)$ the current state and $(r,t)$ the next state generated by first sampling $(\theta,\beta)|(r,t)$ according to (\ref{eq:b}) and (\ref{eq:theta_b}) and then sampling $(r,t)|(\theta,\beta)$ according to (\ref{eq:anti-quad-r}). Let $\|\cdot\|$ denote the Euclidean norm and define $V(x,y):=\|x+y\|^2$ for any $x,y\in\mathbb R^p$. Invoking Theorem 12 in \cite{Rosenthal1994MinorizationCA} (See also \cite{Meyn1994geometricconvergence}), it suffices to show the following two conditions hold:
\begin{enumerate}
    \item \textbf{Drift condition:} There exist $\tilde\lambda <1$ and $\tilde b<\infty$ such that $\mathbb E[V(r,t)|r^0,t^0]\leq\tilde\lambda V(r^0,t^0)+\tilde b$ for any $(r^0,t^0)\in\mathbb R^{2p}$.
    \item \textbf{Minorization condition:} For some $\tilde d>2\tilde b/(1-\tilde\lambda)$, there exist $\varepsilon>0$ and a probability measure $Q:\mathbb R^p\times\mathbb R^p\to\mathbb R$ such that $P_{r,t}((r^0,t^0),\cdot)\geq\varepsilon Q(\cdot)$ holds for all $(r^0,t^0)\in\mathbb R^{2p}$ satisfying $V(r^0,t^0)\leq\tilde d$
\end{enumerate}

\noindent\textbf{1. Drift condition:}
It is known that for a random variable $X$ following a truncated Gaussian distribution $\text{N}_{(a,b)}\left(\mu,\sigma^2\right)$, we have $\mathbb E[X]=\mu+\frac{\varphi(\alpha)-\varphi(\beta)}{\Phi(\beta)-\Phi(\alpha)}\sigma$ where $\alpha=(a-\mu)/\sigma$, $\beta=(b-\mu)/\sigma$, $\varphi(\cdot)$ is the probability density function of the standard Gaussian distribution and $\Phi (\cdot )$ is its cumulative distribution function. Besides, because the truncation reduces the variance, it follows that $\text{Var}(X)\leq\sigma^2$. Hence, we can derive an upper bound for the second moment of $X$:
\[\label{ineq:trunormal2}
\mathbb E[X^2]\leq \left(\mu+\frac{\varphi(\alpha)-\varphi(\beta)}{\Phi(\beta)-\Phi(\alpha)}\sigma\right)^2+\sigma^2.
\]

It follows that for any $j=1,\ldots,p$, 
\(
\mathbb E[\beta_j^2|b_j=1,r^0,t^0]\leq\left(\mu^0_j+\frac{\varphi(\frac{\kappa_j-\mu^0_j}{\sigma})}{1-\Phi(\frac{\kappa_j-\mu^0_j}{\sigma})}\sigma\right)^2+\sigma^2:=\zeta_j^0,
\)
where $\mu^0_j=\frac{\phi_j+r^0_{j}+\psi_j+t^0_{j}+d\kappa_j}{d+e}$ and $\sigma=\frac{1}{\sqrt{d+e}}$.

Using the asymptotic property of the Mill's ratio $\lim\limits_{x\to+\infty} \frac{\varphi(x)}{x(1-\Phi(x))}=1$ and $\lim\limits_{x\to-\infty} \frac{\varphi(x)}{x(1-\Phi(x))}=0$, we have 
\(
\lim_{\mu_j^0\to+\infty} \frac{\zeta_j^0}{(\mu_j^0)^2}=1,\text{ and }\lim_{\mu_j^0\to-\infty} \frac{\zeta_j^0}{(\mu_j^0)^2}=0.
\)

Furthermore, in view of the fact that $\lim_{|r^0_j+t^0_j|\to\infty}\frac{r^0_j+t^0_j}{d+e}/\mu_j^0=1$, we get
\(
\lim_{r_j^0+t_j^0\to+\infty}\frac{\zeta_j^0}{(\frac{r_j^0+t_j^0}{d+e})^2}=1,\text{ and }\lim_{r_j^0+t_j^0\to-\infty}\frac{\zeta_j^0}{(\frac{r_j^0+t_j^0}{d+e})^2}=0.
\)

It follows that for any real number $q>1$, there exists a real number $\tilde{C}_j>0$, such that for all $(r_j^0,t_j^0)\in\mathbb R^2$,
\(
\zeta^0_j<q\left(\frac{r_j^0+t_j^0}{d+e}\right)^2+ \tilde{C}_j. 
\)
In particular, one can take $q=\frac{1-\nu/2}{1-\nu}$ where $\nu:=1-(\lambda_p(dI-M)+\lambda_p(eI-H))^2/(d+e)^2\in(0,1)$ and $\lambda_p(A)$ denotes the largest eigenvalue of matrix $A$. Then we have

\(
\zeta^0_j<\frac{1-\nu/2}{1-\nu}\left(\frac{r_j^0+t_j^0}{d+e}\right)^2+ \tilde{C}_j. 
\)
Noting that a similar result can be reached with a different constant term than $\tilde{C}_j$ for when $b_j=-1$ is given, and that when $b_j=0$ is given, $\beta_j^2\leq\kappa_j^2$, we conclude that
\(
\mathbb E[\beta_j^2|b_j=i,r^0,t^0]\leq \frac{1-\nu/2}{1-\nu}\left(\frac{r_j^0+t_j^0}{d+e}\right)^2+C
\)
for $i\in\{-1,0,1\}$, $j\in\{1,\ldots,p\}$ and some constant $C$.
Hence,
\[\label{ineq:Ebeta2}
\begin{aligned}
\mathbb E[\|\beta\|^2|r^0,t^0]&=\sum_{j=1}^p\mathbb E[\beta_j^2|r^0,t^0]\\
&=\sum_{j=1}^p\sum_{i\in\{-1,0,1\}}\mathbb E[\beta_j^2|b_j=i,r^0,t^0]\text{Pr}(b_j=i|r^0,t^0)\\
&\leq\sum_{j=1}^p \left[\frac{1-\nu/2}{1-\nu}\left(\frac{r_j^0+t_j^0}{d+e}\right)^2+C\right]\\
&=\frac{1-\nu/2}{1-\nu}\frac{1}{(d+e)^2}\|r^0+t^0\|^2+pC.
\end{aligned}
\]

Thus, we establish the drift condition as follows:
\(
\begin{aligned}
&\mathbb E[V(r,t)|r^0,t^0]\\
=&\mathbb E[\mathbb E(V(r,t)|\theta,\beta)|r^0,t^0]\\
=&\mathbb E[\mathbb E(\|r\|^2+\|t\|^2+2r't|\theta,\beta)|r^0,t^0]\\
\stackrel{(a)}{=}&\mathbb E[\|(dI-M)\theta\|^2+\|(eI-H)\beta\|^2+2\theta'(dI-M)(eI-H)\beta|r^0,t^0]\\
&\qquad\qquad+tr(dI-M)+tr(eI-H)+tr\left(\begin{pmatrix}
O&I_p\\
I_p&O
\end{pmatrix}
\begin{pmatrix}
dI-M&O\\O&eI-H
\end{pmatrix}\right)\\
\stackrel{(b)}{\leq}&\lambda^2_p(dI-M)\mathbb E[\|\theta\|^2|r^0,t^0]+\lambda_p^2(eI-H)\mathbb E[\|\beta\|^2|r^0,t^0]\\
&\qquad\qquad +2\lambda_p(dI-M)\lambda_p(eI-H)\mathbb E[\|\theta\|\|\beta\||r^0,t^0]+C'\\
\stackrel{(c)}{\leq}&(\lambda_p(dI-M)+\lambda_p(eI-H))^2\mathbb E[\|\beta\|^2|r^0,t^0]+C'\\
\stackrel{(d)}{\leq}&(\lambda_p(dI-M)+\lambda_p(eI-H))^2\left[\frac{1-\nu/2}{1-\nu}\frac{1}{(d+e)^2}\|r^0+t^0\|^2+pC\right]+C'\\
=&\tilde{\lambda}V(r^0,t^0)+\tilde{b}
\end{aligned}
\)
where $C'=tr(dI-M+eI-H)$, $\tilde{\lambda}=\frac{1-\nu/2}{1-\nu}\frac{(\lambda_p(dI-M)+\lambda_p(eI-H))^2}{(d+e)^2}=1-\frac{\nu}{2}\in(0,1)$, and $\tilde{b}=C'+(\lambda_p(dI-M)+\lambda_p(eI-H))^2pC<\infty$. $(a)$ uses the fact that $(r,t)|(\theta,\beta)$ follows $2p$-dimensional multivariate Gaussian distribution; $(b)$ invokes the Rayleigh-Ritz theorem and for the third term, the Cauchy-Schwarz inequality; $(c)$ uses the fact that $|\theta_j|=(|\beta_j|-\kappa_j)_+\leq|\beta_j|$; $(d)$ uses (\ref{ineq:Ebeta2}).

\noindent\textbf{2. Minorization condition:}
Let $\pi(r,t|r^0,t^0)$ denote the probability density of $(r,t)$ given the last state $(r^0,t^0)$ by first drawing $(\theta,\beta)$ from $\pi(\theta,\beta|r^0,t^0)$ and then drawing $(r,t)$ from $\pi(r,t|\theta,\beta)$. For any $\tilde d>0$, define the level set $\mathcal G_{\tilde d}:=\{(r^0,t^0)\in\mathbb R^{2p}:V(r^0,t^0)\leq\tilde d\}$.

For all $(r^0,t^0)\in\mathcal G_{\tilde d}$, we have $|r^0_j+t_j^0|<\sqrt{\tilde d}$ for any $j=1,\ldots,p$.

Note that
\(
\begin{aligned}
&\pi(r,t|r^0,t^0)\\
=&\mathbb E_{(\theta,\beta|r^0,t^0)}[\pi(r,t| \theta,\beta)]\\
=&K_1\mathbb E_{(\theta,\beta|r^0,t^0)}\left[\exp\left\{-\frac{1}{2}[\theta'(dI-M)\theta+\beta'(eI-H)\beta-2r'\theta-2t'\beta]\right\}\right]\\
\stackrel{(a)}{\geq}& K_1\mathbb E_{(\theta,\beta|r^0,t^0)}\left[ \exp\left\{-\frac{1}{2}\sum_{j=1}^p[d\theta_j^2+e\beta_j^2-2r_j\theta_j-2t_j\beta_j]\right\}\right]\\
\stackrel{(b)}{=}&K_1 \prod_{j=1}^p\mathbb E_{(\theta_j,\beta_j|r^0,t^0)}\left[\exp\left\{-\frac{1}{2}[d\theta_j^2+e\beta_j^2-2r_j\theta_j-2t_j\beta_j]\right\}\right]\\
=&K_1\prod_{j=1}^p \sum_{i\in\{-1,0,1\}} \mathbb E\left[\exp\left\{-\frac{1}{2}[d\theta_j^2+e\beta_j^2-2r_j\theta_j-2t_j\beta_j]\right\}\bigg|b_j=i, r^0,t^0\right]\text{Pr}(b_j=i|r^0,t^0)\\
\geq & K_1\prod_{j=1}^p\min_{i\in\{-1,0,1\}} \underbrace{\mathbb E\left[\exp\left\{-\frac{1}{2}[d\theta_j^2+e\beta_j^2-2r_j\theta_j-2t_j\beta_j]\right\}\bigg|b_j=i, r^0,t^0\right]}_{\mathcal I_{ij}}
\end{aligned}
\)
where $K_1=(2\pi)^{-2p}|dI-M|^{\frac{1}{2}}|eI-H|^{\frac{1}{2}}\exp\{-\frac{1}{2}[r'(dI-M)^{-1}r+t'(eI-H)^{-1}t]\}$. $(a)$ follows from the fact that $M$ and $H$ are both semi-positive definite and $(b)$ uses the conditional independence over $\{(\theta_j,\beta_j)\}_{j=1}^p$ given $(r^0,t^0)$.

When $i=0$: 
\(
\begin{aligned}
\mathcal I_{0j}&=\mathbb E[\exp\{-\frac{1}{2}[e\beta_j^2-2t_j\beta_j]\}|b_j=0,r^0,t^0]\\
&\geq \min \left\{\exp\{-\frac{1}{2}[e\kappa_j^2-2t_j\kappa_j]\},\exp\{-\frac{1}{2}[e\kappa_j^2+2t_j\kappa_j]\}\right\}:=f_{0j}(r,t)
\end{aligned}
\)

When $i=1$:

\begin{align*}
\mathcal I_{1j}=&\mathbb E[\exp\{-\frac{1}{2}[(d+e)\beta_j^2-2(r_j+t_j+d\kappa_j)\beta_j+d\kappa_j^2+2r_j\kappa_j]\}|b_j=1,r^0,t^0]\\
=&\frac{\sqrt{\frac{d+e}{2\pi}}}{1-\Phi[\sqrt{d+e}(\kappa_j-\mu_j^0)]}\\
&\cdot\int_{\kappa_j}^\infty\exp\bigg\{-\frac{1}{2}[2(d+e)\beta_j^2-2(r_j+t_j+d\kappa_j+(d+e)\mu_j^0)\beta_j\\
&\qquad\qquad\qquad+d\kappa_j^2+2r_j\kappa_j+(d+e)(\mu_j^0)^2]\bigg\}d\beta_j\\
=&\frac{\sqrt{\frac{d+e}{2\pi}}}{1-\Phi[\sqrt{d+e}(\kappa_j-\mu_j^0)]}\\
&\cdot\int_{\kappa_j}^\infty \exp\left\{-\frac{2(d+e)}{2}\left(\beta_j-\frac{r_j+t_j+d\kappa_j+(d+e)\mu_j^0}{2(d+e)}\right)^2\right\}d\beta_j\\
&\cdot\exp\left\{\frac{(r_j+t_j+d\kappa_j+(d+e)\mu_j^0)^2}{4(d+e)}\right\}\exp\left\{-\frac{d\kappa_j^2+2r_j\kappa_j+(d+e)(\mu_j^0)^2}{2}\right\}\\
=&\frac{1/\sqrt{2}}{1-\Phi[\sqrt{d+e}(\kappa_j-\mu_j^0)]}\cdot\left[1-\Phi\left(\frac{\kappa_j-\frac{r_j+t_j+d\kappa_j+(d+e)\mu_j^0}{2(d+e)}}{1/\sqrt{2(d+e)}}\right)\right]\\
&\cdot\exp\left\{\frac{(r_j+t_j+d\kappa_j+(d+e)\mu_j^0)^2}{4(d+e)}\right\}\exp\left\{-\frac{d\kappa_j^2+2r_j\kappa_j+(d+e)(\mu_j^0)^2}{2}\right\}\\
\geq&\frac{1}{\sqrt{2}}\Phi\left[\frac{\frac{r_j+t_j+d\kappa_j+(d+e)\mu_j^0}{2(d+e)}-\kappa_j}{1/\sqrt{2(d+e)}}\right]\exp\left\{-\frac{d\kappa_j^2+2r_j\kappa_j+(d+e)(\mu_j^0)^2}{2}\right\}\\
\stackrel{(a)}{\geq}&\frac{1}{\sqrt{2}}\Phi\left[\frac{\frac{r_j+t_j+d\kappa_j}{2(d+e)}-\frac{1}{2}(\sqrt{\tilde d}+|\phi_j+\psi_j+d\kappa_j|)-\kappa_j}{1/\sqrt{2(d+e)}}\right]\\
&\cdot\exp\left\{-\frac{d\kappa_j^2+2r_j\kappa_j+\frac{(\sqrt{\tilde d+|\phi_j+\psi_j+d\kappa_j|})^2}{d+e}}{2}\right\}\\
:=&f_{1j}(r,t),
\end{align*}
where $(a)$ uses the fact that on $\mathcal G_{\tilde d}$, we have
\(
|\mu_j^0|\leq\frac{|r_j^0+t_j^0|+|\phi_j+\psi_j+d\kappa_j|}{d+e}\leq\frac{\sqrt{\tilde d}+|\phi_j+\psi_j+d\kappa_j|}{d+e}.
\)

When $i=-1$: Similar to the $i=1$ case, we can find a lower bound $f_{-1j}(r,t)$ for $\mathcal I_{-1j}$.

To sum up, we have proved that 
\(
\pi(r,t|r^0,t^0)\geq K_1\prod_{j=1}^p\min_{i\in\{-1,0,1\}} f_{ij}(r,t),\quad\forall (r^0,t^0)\in\mathcal G_{\tilde d},
\)
which immediately leads to the minorization condition.

\section{Additional Results}

\subsection{Alternative Algorithm for Sampling Anti-correlation Gaussian}
We now describe another efficient algorithm to sample from the anti-correlation Gaussian when $M=X'\Omega X$. Since $dI-X'\Omega X$ is positive definite, we have the following converging series:
\(
(I-  d^{-1}X'\Omega X )^{-1}  = \sum_{k=0}^{\infty}  ( d^{-1}X'\Omega X )^k.
\)

When truncating the right-hand side at $K$, we have:
\[\label{eq:truncated_cov}
\sum_{k=0}^{K}  ( d^{-1}X'\Omega X )^k
=  [ I -  ( d^{-1}X'\Omega X )^{K+1}] (I-  d^{-1}X'\Omega X )^{-1},
\]
where the relative error term $\|( d^{-1}X'\Omega X )^{K+1}\|_2\le   \rho^{K}$, with $\rho=\lambda_p(X'\Omega X)/ d$. Therefore, with a selected $D$ and a numeric tolerance $\epsilon$, we can control $\rho$ and find out a suitable truncation $\hat K= \log_\rho \epsilon$. In our code, we use $\epsilon =10^{-8}$, and $\rho=2/3$  leading to $\hat K=45$.

To generate a Gaussian with mean $0$ and covariance as in \eqref{eq:truncated_cov}, we can run the following algorithm, starting from a $p$-element $\gamma=0$.
\begin{enumerate}
  \item Simulate $\eta_1 \sim \text{N}(0,I_p)$, $\eta_2 \sim \text{N}(0,I_n),$
  \item Update $\gamma \leftarrow \eta_1 + d^{-1}X'\Omega^{1/2} \eta_2+  ( d^{-1}X'\Omega X) \gamma$.
\end{enumerate}
And it is not hard to see that each iteration increases $K$ by $2$, therefore, we only need to run the above algorithm for $23$ iterations. Afterward, we set
\(
r = (dI-X'\Omega X) d^{-1/2} \gamma + (dI-X'\Omega X)\theta,
\)
which yields a Gaussian with mean $(dI-X'\Omega X)\theta$, variance numerically very close to $(dI-X'\Omega X)$.

\subsection{Fast Convergence of the Anti-correlation Gaussian Blocked Sampler}

We investigate the convergence speed of the anti-correlation Gaussian blocked sampler. We use the linear regression simulation setting with $n=300$, $\rho=0.5$, and $p\in\{1000, 5000\}$. Figure \ref{fig:lp} shows the trace plots of log-posterior density during the burn-in stage. In both settings, the log-posterior density reaches stationarity within the first 500 iterations.

\begin{figure}[H]
\centering
\begin{subfigure}[t]{0.45\textwidth}
    \begin{overpic}[width=\textwidth]{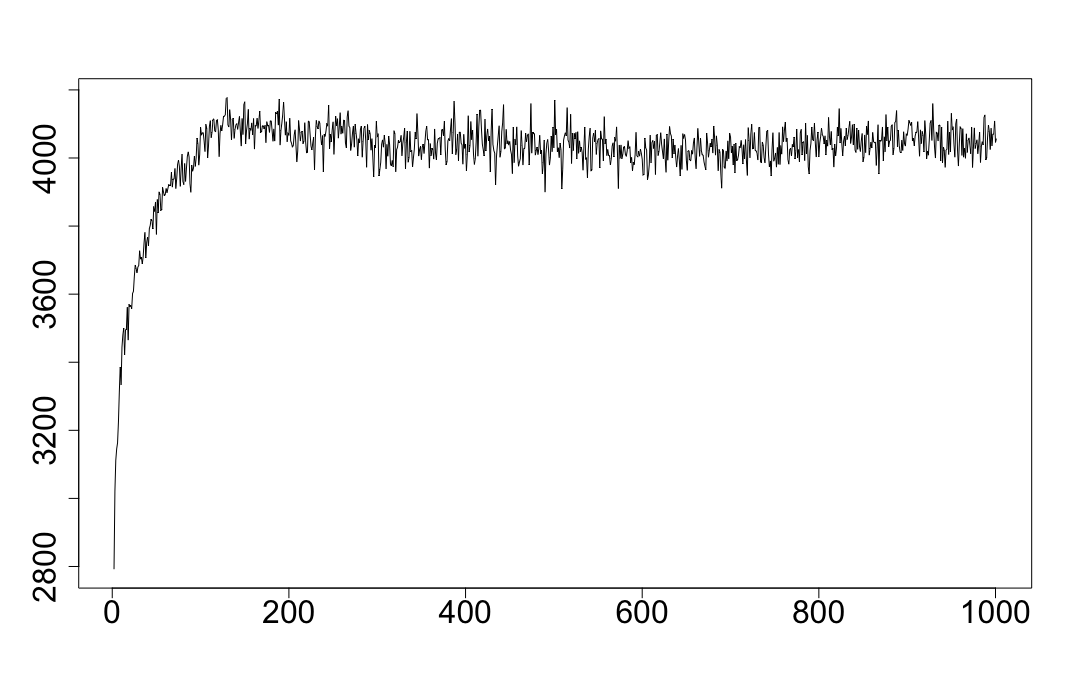}
    \put(40, -1){\scriptsize Iteration}
    \put(-2.5,20){\rotatebox{90}{\scriptsize Log posterior}}
    \end{overpic}
    \caption{Log posterior density for $p=1000$.}
\end{subfigure}
\begin{subfigure}[t]{0.45\textwidth}
    \begin{overpic}[width=\textwidth]{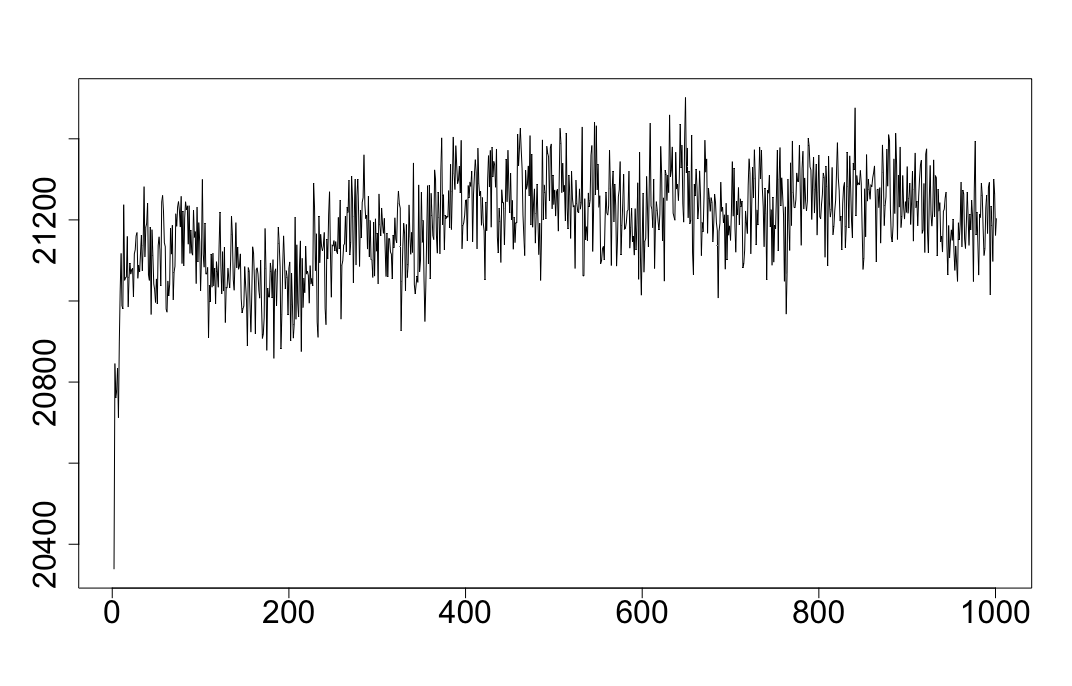}
    \put(40, -1){\scriptsize Iteration}
    \put(-2.5,20){\rotatebox{90}{\scriptsize Log posterior}}
    \end{overpic}
    \caption{Log posterior density for $p=5000$.}
\end{subfigure}
\caption{Trace plots of the log-posterior densities during the burn-in stage.}
\label{fig:lp}
\end{figure}

\subsection{Additional Simulation for Estimating Soft-thresholded Gaussian Process}
 We conduct a simulation study for comparing the performance of the anti-correlation blocked Gibbs sampler and the Metropolis-Hastings within Gibbs algorithm  (\cite{kang2018scalar}, implemented in the R package `STGP'). We generate a 30 by 30 image from a soft-thresholded Gaussian process and then add Gaussian noise to each pixel. We run both algorithms for 3,000 iterations, and treat the first 1,000 as burn-ins. The results are shown in Figure \ref{plot:stgp_simulation}. In terms of the mixing, the trace plots show that the Metropolis-Hastings within Gibbs algorithm has slightly faster mixing --- {although their implementation involves some C++ code, likely due to the high cost of Cholesky decomposition}. Both methods yield similar image estimates that are smooth and sparse, with MSE=0.02 for the anti-correlation Gaussian and MSE=0.04 for the MH-within-Gibbs sampler.

\begin{figure}[H]
\begin{subfigure}[t]{.48\textwidth}
    \begin{overpic}[width=\textwidth]{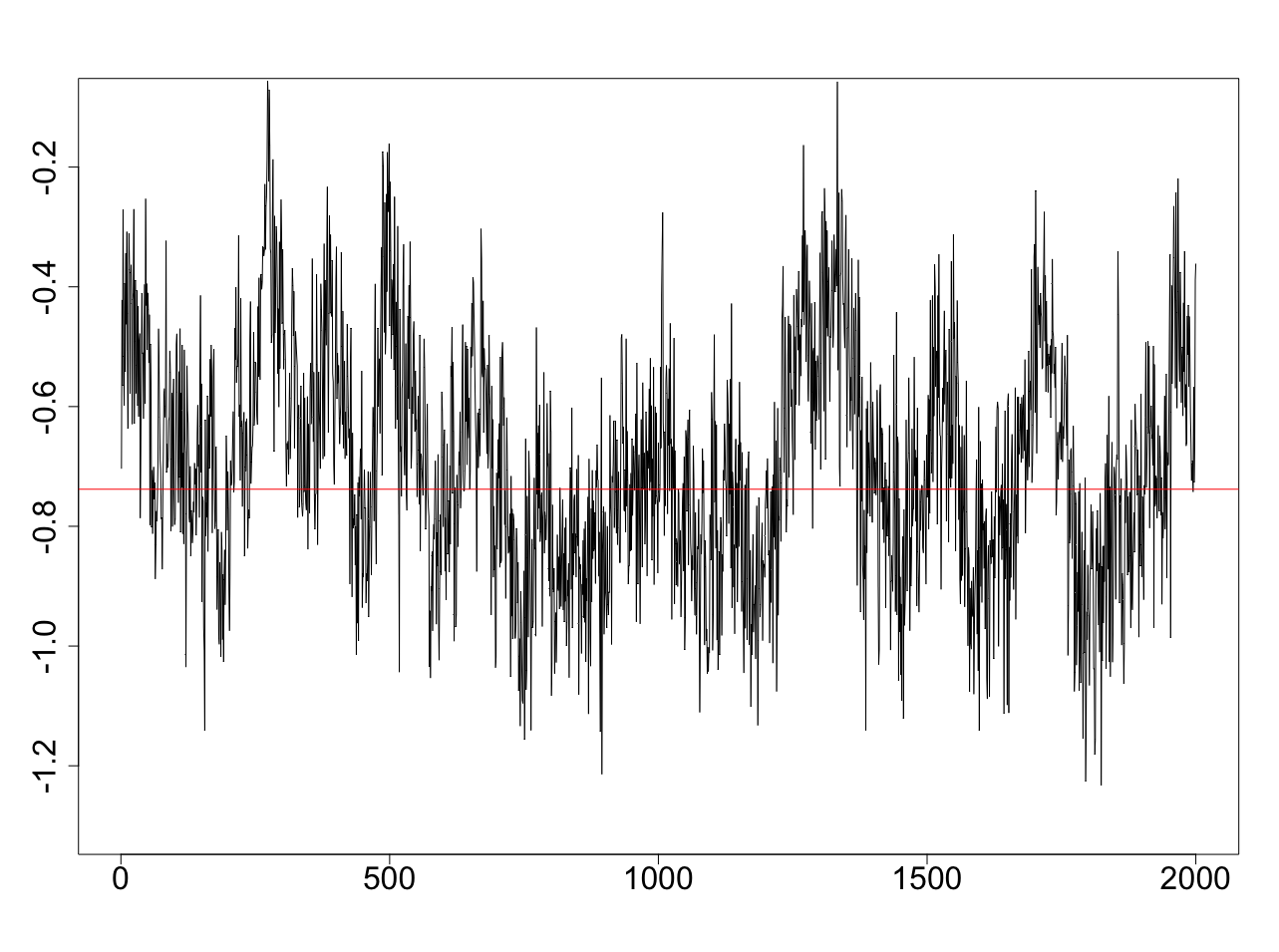}
    \put(40, -1){\scriptsize Iteration}
    \end{overpic}
    \caption{Trace plot of $\theta_{116}$ for the anti-correlation Gaussian. The horizontal line denotes the ground truth.}
\end{subfigure}
\begin{subfigure}[t]{.48\textwidth}
    \begin{overpic}[width=\textwidth]{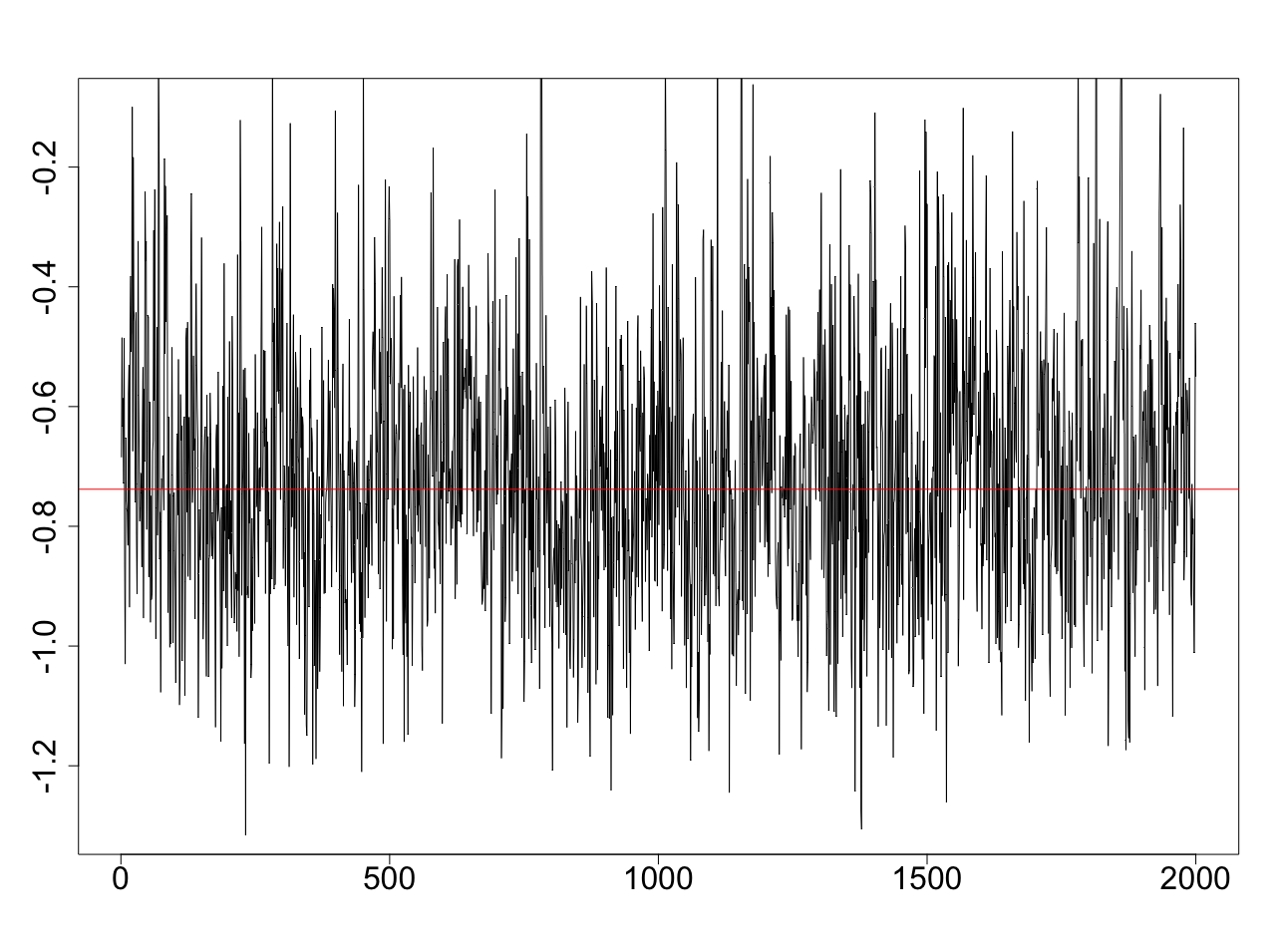}
    \put(40, -1){\scriptsize Iteration}
    \end{overpic}
    \caption{Trace plot of $\theta_{116}$ for the MH within Gibbs sampler. The horizontal line denotes the ground truth.}
\end{subfigure}
\begin{subfigure}[t]{.24\textwidth}
    \begin{overpic}[width=\textwidth]{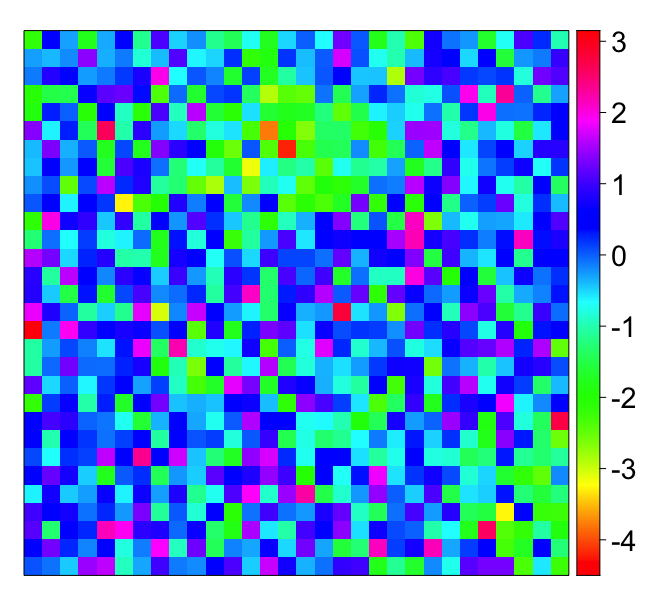}
    \end{overpic}
    \caption{Simulated data $y$.}
\end{subfigure}
\begin{subfigure}[t]{.24\textwidth}
    \begin{overpic}[width=\textwidth]{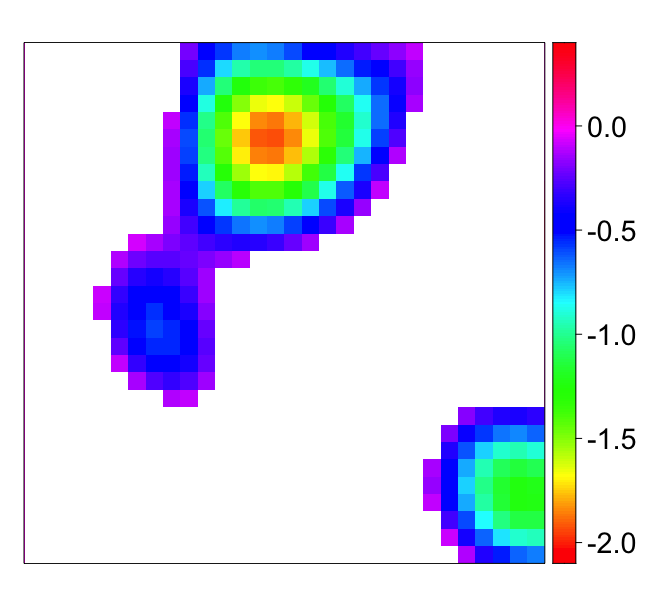}
    \end{overpic}
    \caption{Ground truth $\theta$.}
\end{subfigure}
\begin{subfigure}[t]{.24\textwidth}
    \begin{overpic}[width=\textwidth]{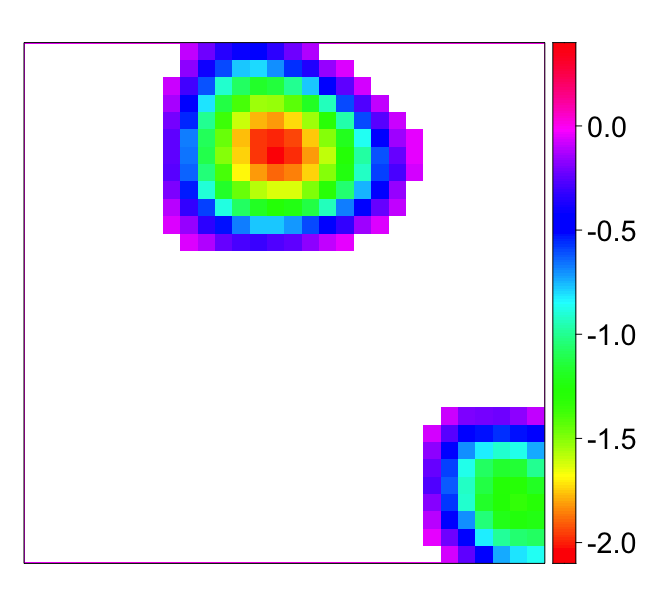}
    \end{overpic}
    \caption{Posterior estimate $\hat{\theta}$ produced by the anti-correlation Gaussian.}
\end{subfigure}
\begin{subfigure}[t]{.24\textwidth}
    \begin{overpic}[width=\textwidth]{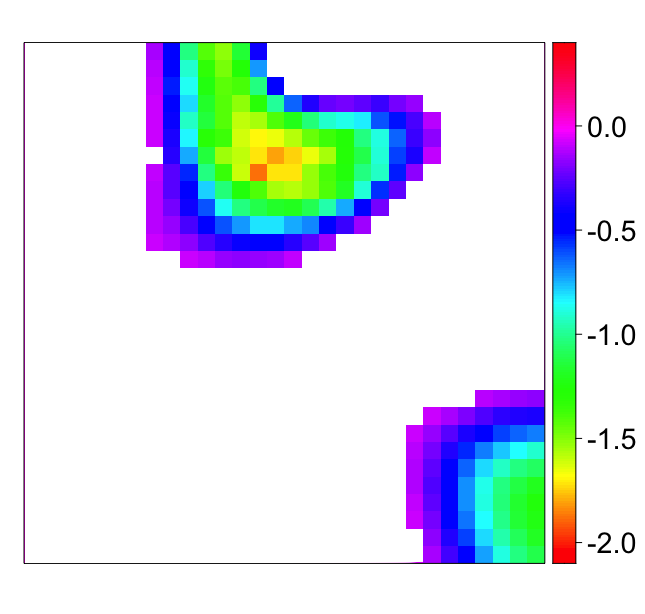}
    \end{overpic}
    \caption{Posterior estimate $\hat{\theta}$ produced by the MH-within-Gibbs sampler.}
\end{subfigure}
\begin{subfigure}[t]{.24\textwidth}
    \begin{overpic}[width=\textwidth]{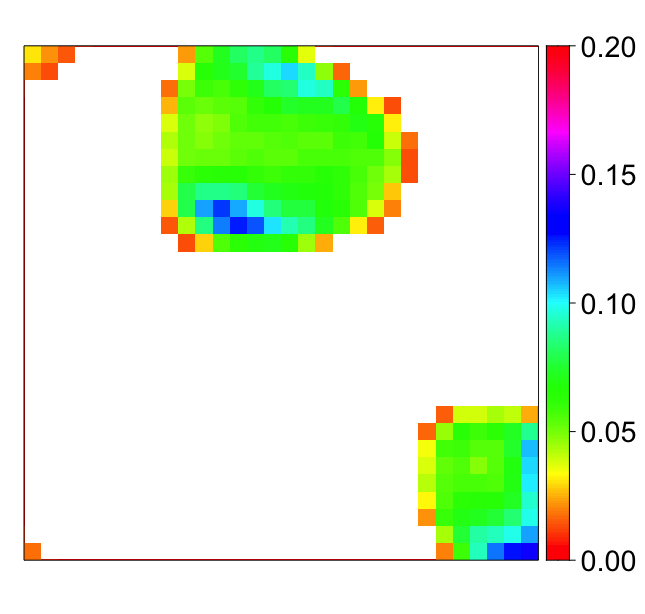}
    \end{overpic}
    \caption{Point-wise posterior variances for $\hat{\theta}$ produced by the anti-correlation Gaussian.}
\end{subfigure}
\begin{subfigure}[t]{.24\textwidth}
    \begin{overpic}[width=\textwidth]{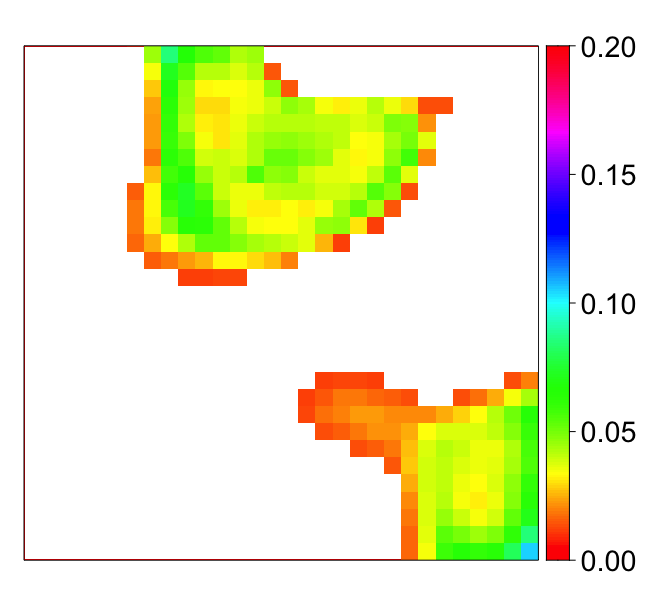}
    \end{overpic}
    \caption{Point-wise posterior variances for $\hat{\theta}$ produced by the MH-within-Gibbs sampler.}
\end{subfigure}
\begin{subfigure}[t]{.24\textwidth}
    \begin{overpic}[width=\textwidth]{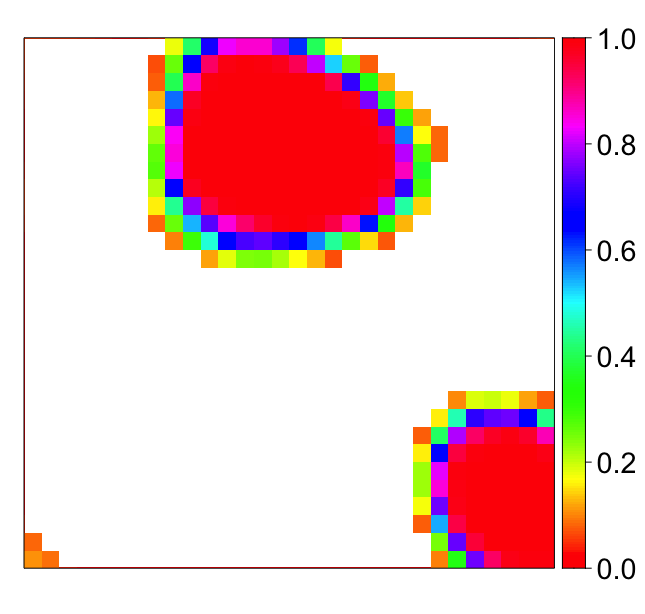}
    \end{overpic}
    \caption{$P(\theta_j\not=0|y)$ produced  by the anti-correlation Gaussian.}
\end{subfigure}
\begin{subfigure}[t]{.24\textwidth}
    \begin{overpic}[width=\textwidth]{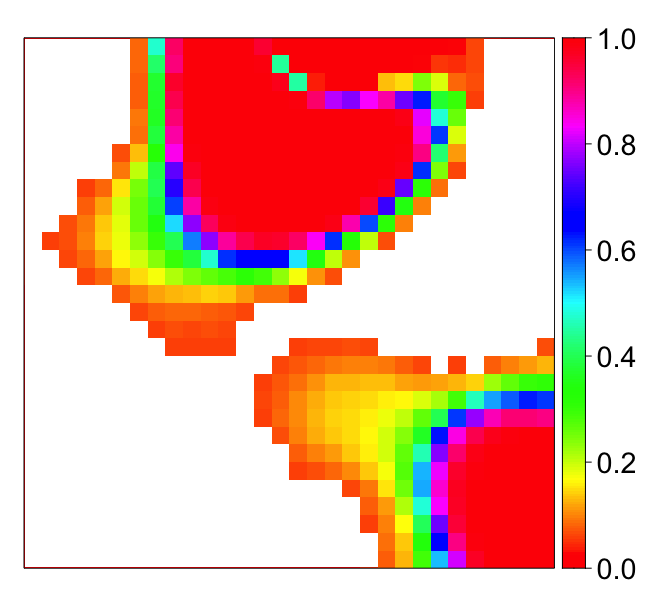}
    \end{overpic}
    \caption{$P(\theta_j\not=0|y)$ produced by the MH-within-Gibbs sampler.}
\end{subfigure}
\caption{Trace plots, data, ground truth, posterior estimates, point-wise variances, and probabilities that an entry of $\theta$ is non-zero.}
\label{plot:stgp_simulation}
\end{figure}

\subsection{Benchmark: Sampling Truncated Multivariate Gaussian using Anti-correlation Gaussian}\label{sec:truncMVN}
A comparison regarding the computational efficiency is conducted between the anti-correlation Gaussian and the `rtmvnorm()' function in the R package `tmvtnorm' (See \cite{Wilhelm2010tmvtnormAP} for details). The argument ``algorithm " in the function is set to be `gibbsR' for a fair comparison in R language; otherwise, it uses Fortran to accelerate the function. We sample from $p$-dimensional truncated multivariate Gaussian distribution with zero mean, identity covariance matrix, and truncation area $R=\prod_{i=1}^p(-4,-3]$. For such specifications, rejection sampling breaks down even when $p$ is as small as 10. We conducted $20$ simulations with $p=10$ for each algorithm. For each simulation, we do 12,000 iterations with 2,000 burn-ins and record the total clock wall time (in seconds). Figure \ref{plot:truncMVN} presents the trace plots of one randomly selected element of the parameter and compares the effective sample size per computing time (ESS/s) for anti-correlation Gaussian and rtmvnorm. Both algorithms enjoy fast mixing; however, anti-correlation Gaussian has an average ESS/s more than twice the average ESS/s for rtmvnorm across the 20 replications.

\begin{figure}[H]
\begin{subfigure}[t]{0.32\textwidth}
    \begin{overpic}[width=\textwidth]{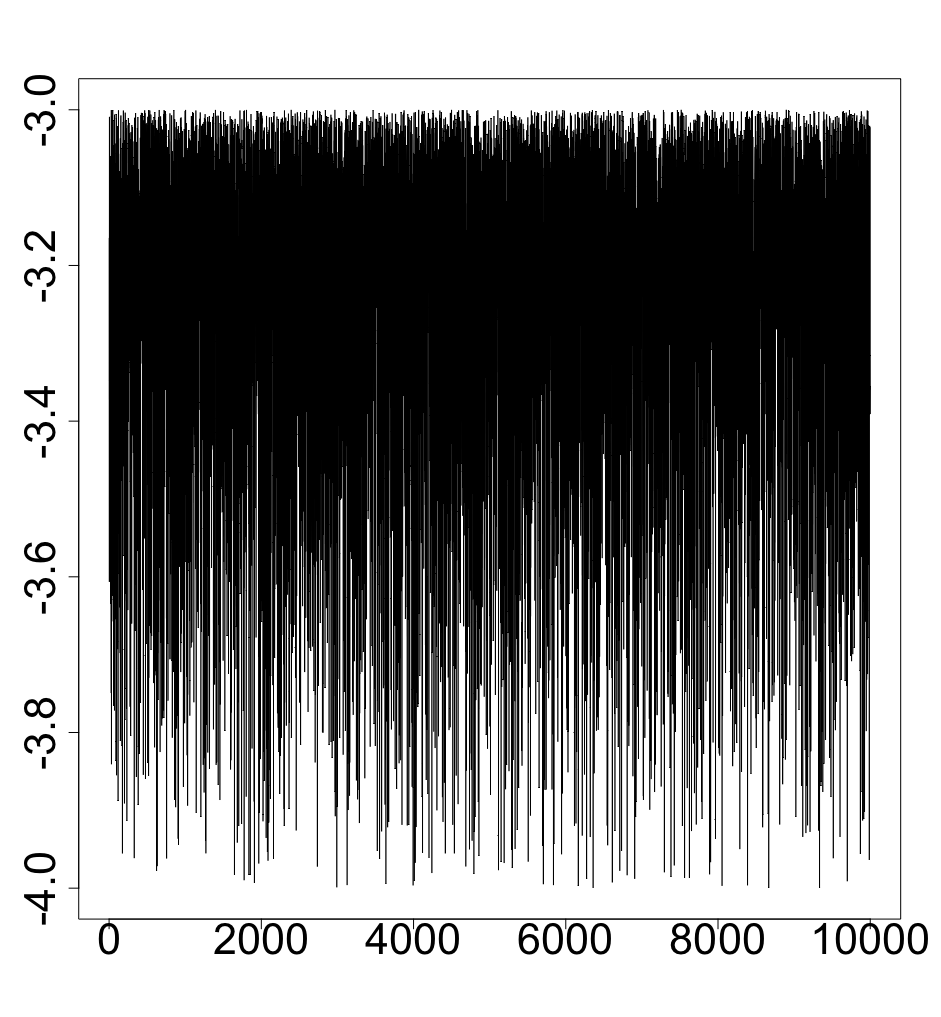}
    \put(40, -1){\scriptsize Iteration}
    \end{overpic}
    \caption{Traceplot of the second element of the parameter using Anti-Corr Gaussian.}
\end{subfigure}
\begin{subfigure}[t]{0.32\textwidth}
    \begin{overpic}[width=\textwidth]{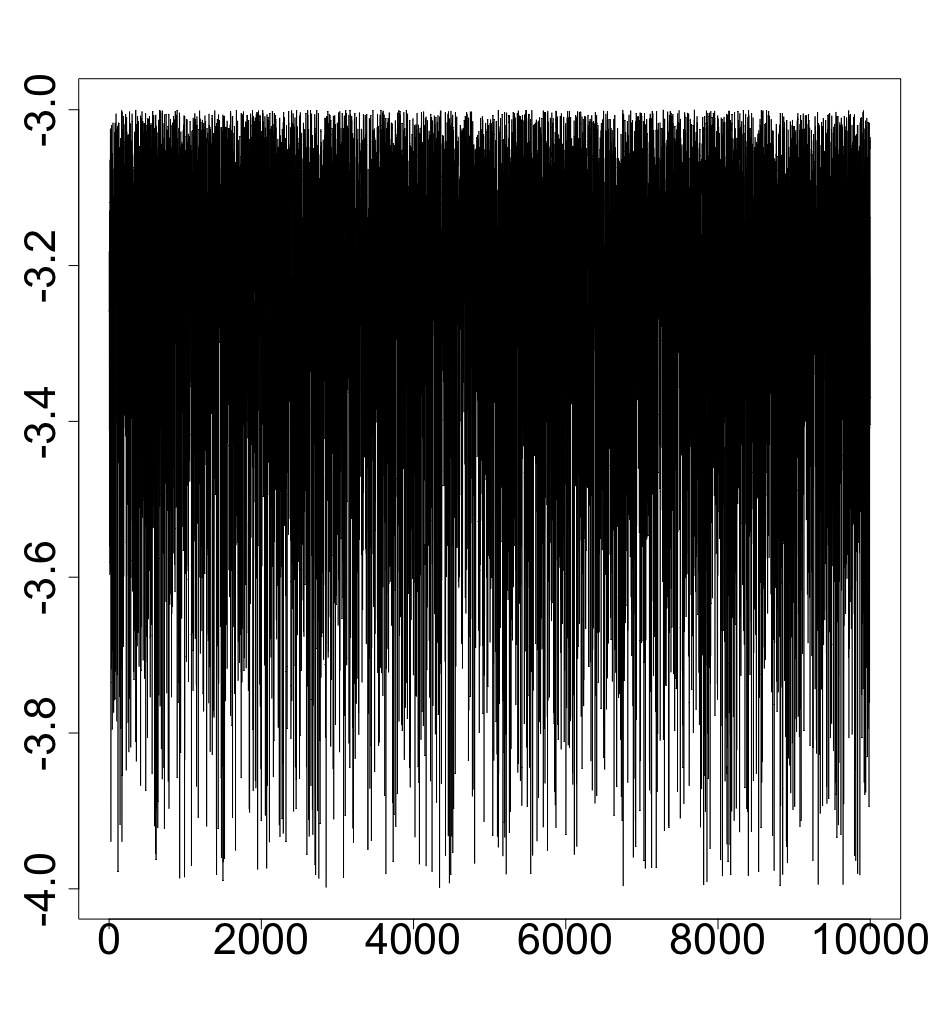}
    \put(40, -1){\scriptsize Iteration}
    \end{overpic}
    \caption{Traceplot of the second element of the parameter using rtmvnorm.}
\end{subfigure}
\begin{subfigure}[t]{0.31\textwidth}
    \begin{overpic}[width=\textwidth]{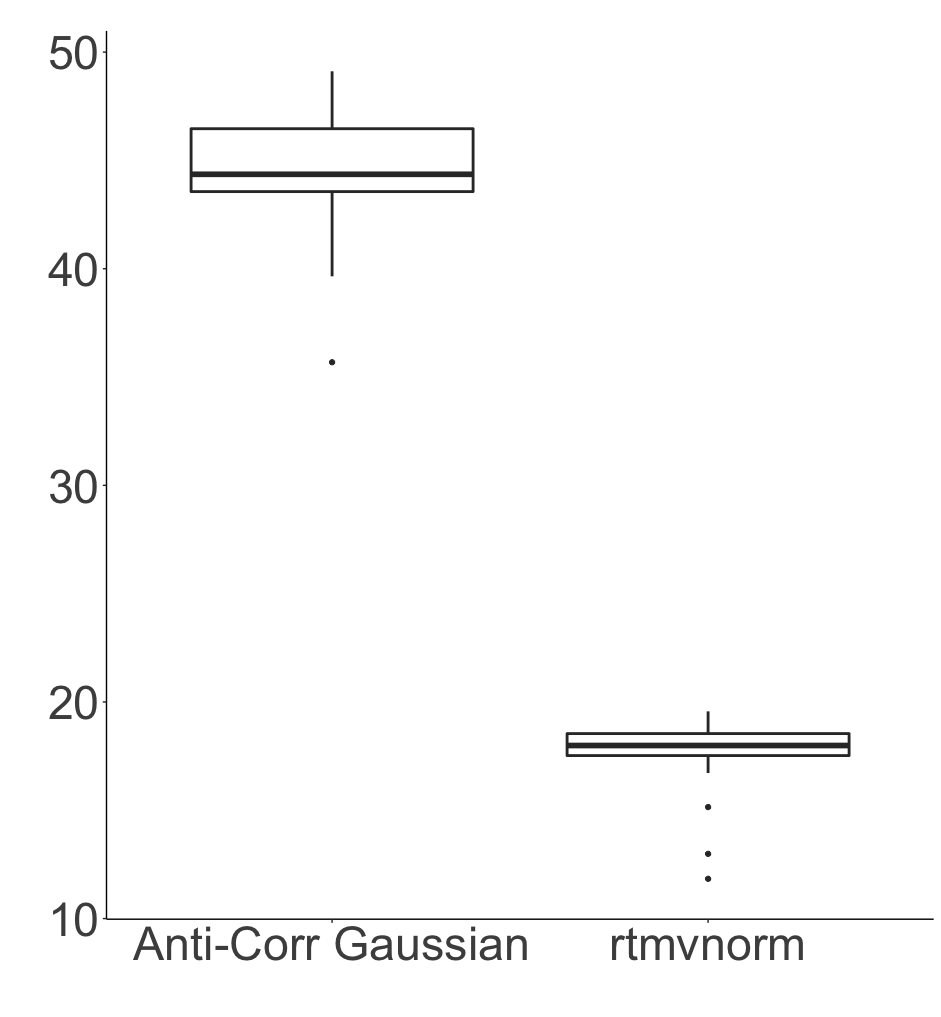}
    \put(40, 0){\scriptsize Algorithm}
    \put(0, 37){\rotatebox{90}{\scriptsize ESS/s ($\times$1,000)}}
    \end{overpic}
    \caption{Average ESS/s.}
\end{subfigure}
\caption{Traceplots and effective sample size per computing time (ESS/s) for anti-correlation Gaussian Gibbs sampler and rtmvnorm.}
\label{plot:truncMVN}
\end{figure}

\subsection{Stochastic Search Variable Selection (SSVS) Algorithm in Linear Regression}
For the original spike-and-slab priors, the stochastic search variable selection algorithm \citep{george1995stochastic} (a collapsed Gibbs sampler) enjoys excellent mixing and fast computing speed. On the other hand, the SSVS algorithm is limited to linear models that assume iid latent binary events $b_j=1(\theta_j\neq 0)$ for $j=1,\ldots,p$, whereas our algorithm can be applied to general settings with dependent $b_j$'s. For example, in a soft-thresholded Gaussian process prior, there is a strong dependency between $\theta_{s}=0$ and $\theta_{s'}=0$ if the two spatial locations  $s$ and $s'$ are close to each other. We provide a comparison between SSVS and anti-correlation Gaussian under the linear regression setting as in Section \ref{sec:lin_reg} with $p=500$, $c=3$, and $\rho=0.5$. For SSVS, we use the `lm.spike()' function in the R package `BoomSpikeSlab' \citep{scott2014predicting}. Figure \ref{plot:SSVS} shows that both anti-correlation Gaussian and SSVS work well in terms of the estimation accuracy of $\theta$, while SSVS has a faster mixing.

\begin{figure}[H]
\centering
\begin{subfigure}[t]{0.45\textwidth}
    \begin{overpic}[width=\textwidth]{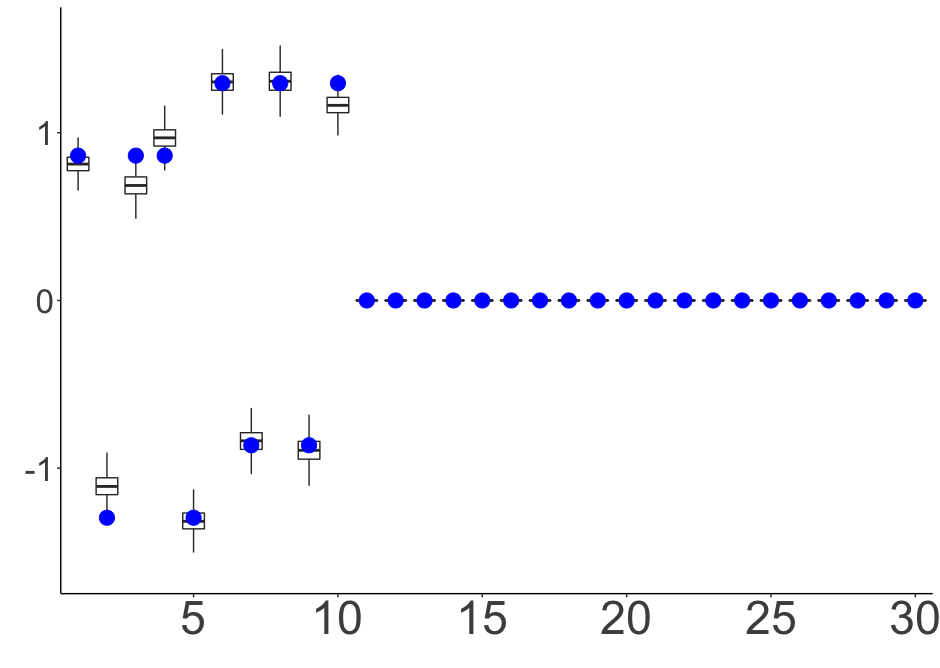}
    \put(50, -1){\scriptsize Index $j$}
    \put(-1, 37){{$\hat{\theta}_j$}}
    \end{overpic}
    \caption{Estimates $\hat{\theta}_j$ ($1\leq j\leq 30$) using anti-correlation Gaussian.}
\end{subfigure}
\begin{subfigure}[t]{0.45\textwidth}
    \begin{overpic}[width=\textwidth]{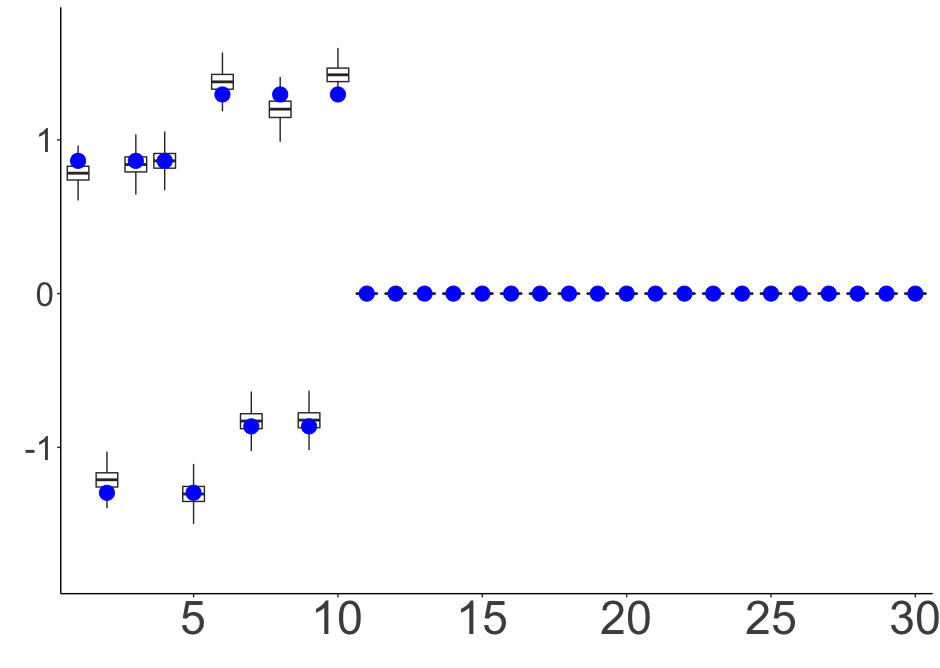}
    \put(50, -1){\scriptsize Index $j$}
    \put(-1, 37){{$\hat{\theta}_j$}}
    \end{overpic}
    \caption{Estimates $\hat{\theta}_j$ ($1\leq j\leq 30$) using SSVS.}
\end{subfigure}
\begin{subfigure}[t]{0.45\textwidth}
    \begin{overpic}[width=\textwidth]{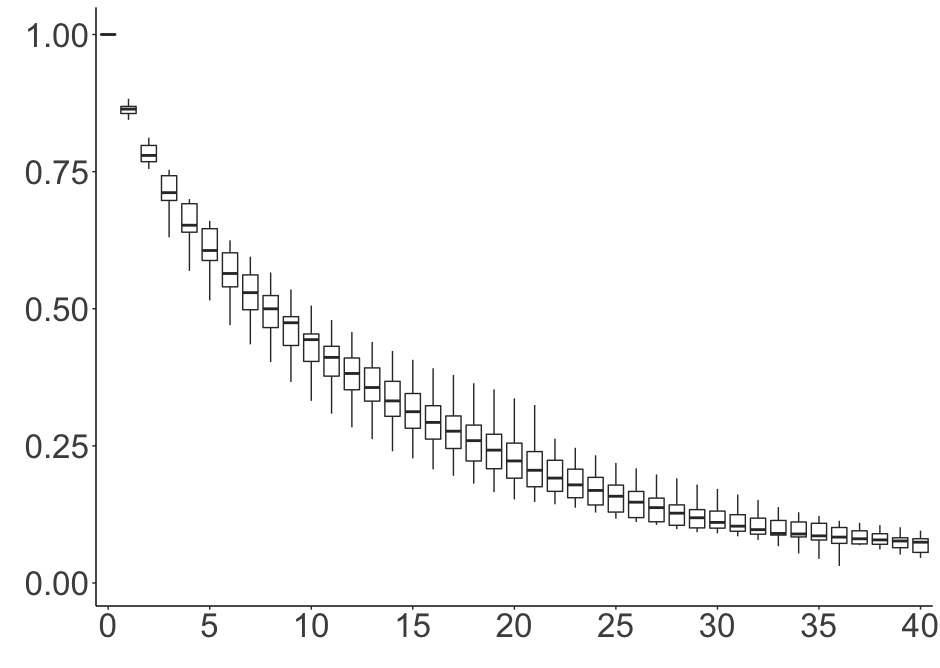}
    \put(40, 0){\scriptsize Lag}
    \end{overpic}
    \caption{ACF plot of $\theta_{1:10}$, the non-zero part of $\theta$, using anti-correlation Gaussian.}
\end{subfigure}
\begin{subfigure}[t]{.45\textwidth}
    \begin{overpic}[width=\textwidth]{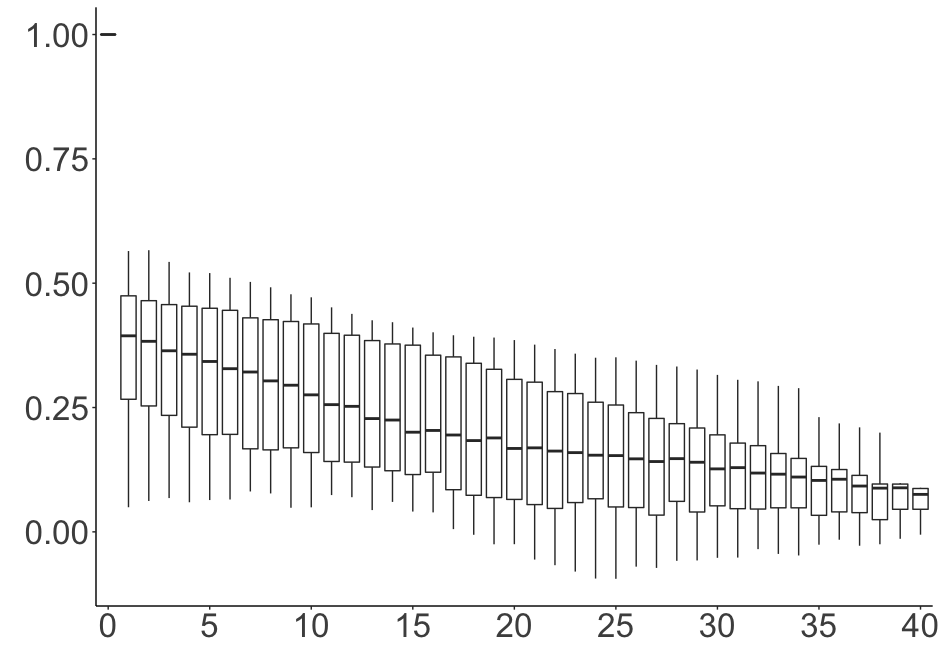}
    \put(40, -1){\scriptsize Lag}
    \end{overpic}
    \caption{ACF plot of $\theta_{1:10}$, the non-zero part of $\theta$, using SSVS.}
\end{subfigure}
\caption{Estimates and ACF plots for anti-correlation Gaussian and SSVS. The blue points denote the ground truth. Each box on the plots of estimates incorporates all sample values after the burn-in stage.}
\label{plot:SSVS}
\end{figure}

\subsection{Mixing under Different Values of Tuning Parameters $d$ and $e$}\label{mixing_de}

We empirically assess the effects of the values of $d$ (or $e$) on the mixing. We use the linear regression simulation setting with $p=50, \rho=0.5$, and $n=300$ for experiments. Since $d\sigma^2$ should be larger than $\lambda_p(X^TX)$, we pick $d=\lambda_p(X^TX)/\sigma^2+\varepsilon$. When $\varepsilon$ takes value from $\{10^{-6},10^{-4},10^{-2},1,10\}$, we see almost no difference in terms of mixing performance. 
When $\varepsilon= 100$, the algorithm fails to converge to the ground truth, likely due to numerical overflow.

\begin{figure}[H]
\centering
\includegraphics[width=0.8\linewidth]{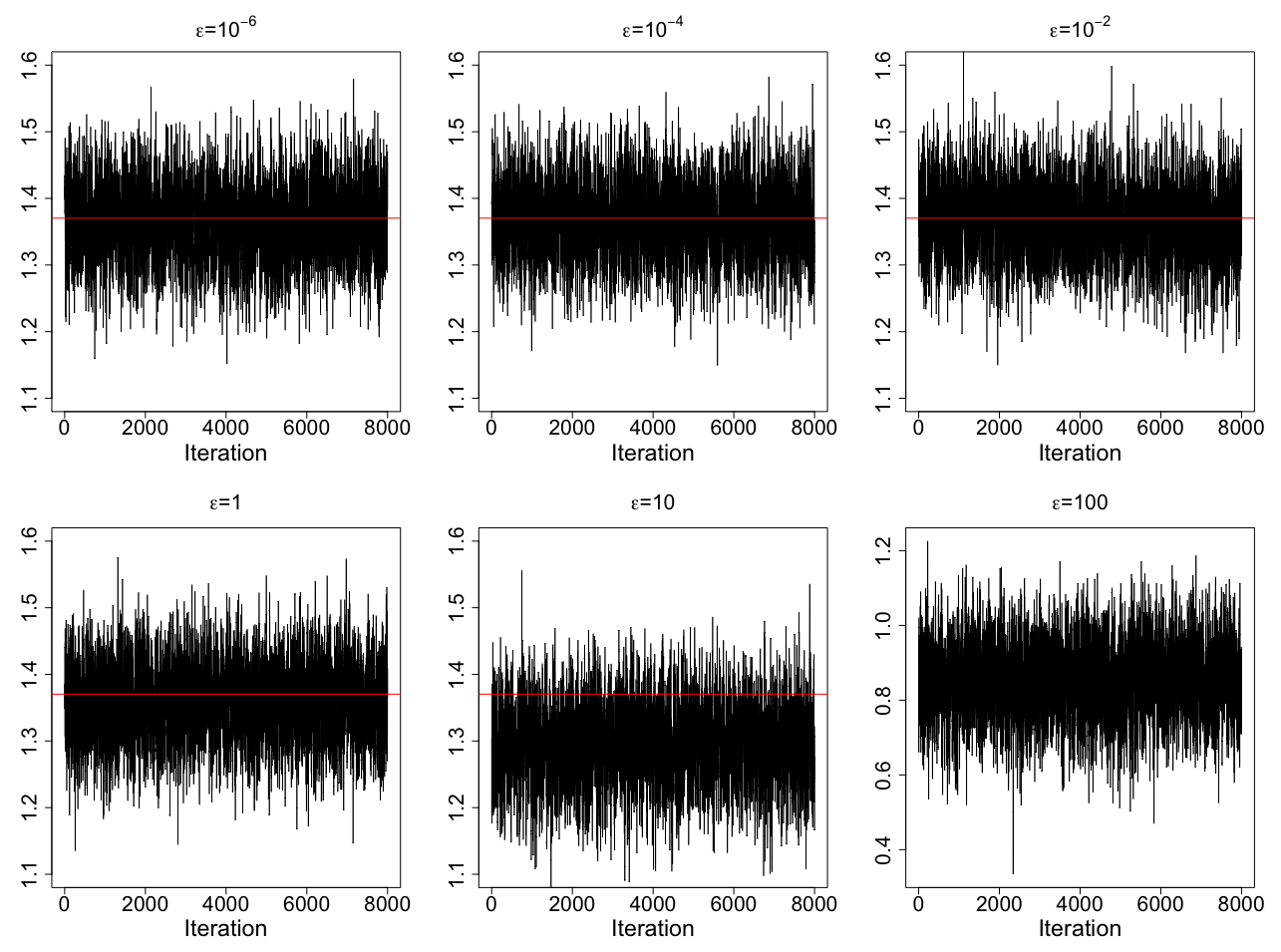}
\caption{Under $d=\lambda_p(X^TX)/\sigma^2+\varepsilon$ with $\varepsilon\in \{10^{-6},10^{-4},10^{-2},1,10,10\}$, the mixings of anti-correlation blocked Gibbs sampler have almost no difference. }
\label{plot:d}
\end{figure}

\end{document}